\DeclareSymbolFont{frenchscript}{OMS}{ztmcm}{m}{n}
\DeclareMathSymbol{\A}{\mathord}{frenchscript}{65}    
\DeclareMathSymbol{\Ch}{\mathord}{frenchscript}{67}   
\DeclareMathSymbol{\Pow}{\mathord}{frenchscript}{80}  
\DeclareMathSymbol{\V}{\mathord}{frenchscript}{88}    
\DeclareMathAlphabet{\mathcal}{OMS}{cmsy}{m}{n}       
\DeclareMathAlphabet{\mathbbm}{U}{bbm}{m}{n}          
\newcommand{\IE}{\mathbbm{E}}                         
\newcommand{\IT}{\mathbbm{T}}                         
\newcommand{\IR}{\mathbbm{R}}                         
\newtheorem{defi}{Definition}
\newtheorem{theo}{Theorem}
\newtheorem{prop}{Proposition}
\newtheorem{lemm}{Lemma}
\newtheorem{coro}{Corollary}
\newtheorem{obs}{Observation}
\newtheorem{exam}{Example}
\newtheorem{op}{Question}
\newenvironment{definition}[1]{\begin{defi} \rm \label{df:#1} }{\end{defi}}
\newenvironment{definitionA}[2]{\begin{defi}[#1] \rm \label{df:#2} }{\end{defi}}
\newenvironment{theorem}[1]{\begin{theo} \rm \label{thm:#1} }{\end{theo}}
\newenvironment{proposition}[1]{\begin{prop} \rm \label{pr:#1} }{\end{prop}}
\newenvironment{example}[1]{\begin{exam} \rm \label{ex:#1} }{\end{exam}}
\newenvironment{open}[1]{\begin{op} \rm \label{op:#1} }{\end{op}}
\newenvironment{proof}{\begin{trivlist} \item[\hspace{\labelsep}\bf Proof:]}{\hfill $\Box$\end{trivlist}}
\newcommand{\Sec}[1]{Section~\ref{sec:#1}}
\newcommand{\Sects}{Sections}
\newcommand{\df}[1]{Definition~\ref{df:#1}}
\newcommand{\thm}[1]{Theorem~\ref{thm:#1}}
\newcommand{\pr}[1]{Proposition~\ref{pr:#1}}
\newcommand{\ex}[1]{Example~\ref{ex:#1}}
\newcommand{\tab}[1]{Table~\ref{tab:#1}}
\newcommand{\fig}[1]{Figure~\ref{#1}}
\newenvironment{itemise}%
  {\begin{itemize}%
    \setlength{\itemsep}{0pt}%
    \setlength{\parskip}{0pt}}%
  {\end{itemize}}
\def\comesfrom{\@transition\leftarrowfill}
\def\goesto{\@transition\rightarrowfill}
\def\ngoesto{\@transition\nrightarrowfill}
\def\Goesto{\@transition\Rightarrowfill}
\def\nGoesto{\@transition\nRightarrowfill}
\def\xmapsto{\@transition\mapstofill}
\def\nxmapsto{\@transition\nmapstofill}
\def\@transition#1{\@@transition{#1}}
\newbox\@transbox
\newbox\@arrowbox
\newbox\@downbox
\def\@@transition#1#2%
\wd\@transbox{#1}
\@transbox\hbox{$\mathop{\box\@arrowbox}\limits^{\box\@transbox}$}
\def\nrightarrowfill{$\m@th\mathord-\mkern-6mu%
  \cleaders\hbox{$\mkern-2mu\mathord-\mkern-2mu$}\hfill
  \mkern-6mu\mathord\not\mkern-2mu\mathord\rightarrow$}
\def\Rightarrowfill{$\m@th\mathord=\mkern-6mu%
  \cleaders\hbox{$\mkern-2mu\mathord=\mkern-2mu$}\hfill
  \mkern-6mu\mathord\Rightarrow$}
\def\nRightarrowfill{$\m@th\mathord=\mkern-6mu%
  \cleaders\hbox{$\mkern-2mu\mathord=\mkern-2mu$}\hfill
  \mkern-6mu\mathord\not\mathord\Rightarrow$}
\def\mapstofill{$\m@th\mathord\mapstochar\mathord-\mkern-6mu%
  \cleaders\hbox{$\mkern-2mu\mathord-\mkern-2mu$}\hfill
  \mkern-6mu\mathord\rightarrow$}
\def\nmapstofill{$\m@th\mathord\mapstochar\mathord-\mkern-6mu%
  \cleaders\hbox{$\mkern-2mu\mathord-\mkern-2mu$}\hfill
  \mkern-6mu\mathord\not\mkern-2mu\mathord\rightarrow$}
\newcommand{\ar}[1]{\mathrel{\goesto{#1}}}            
\newcommand{\goto}[1]{\stackrel{#1}{\longrightarrow}} 
\newcommand{\dto}[1]{\mathrel{\stackrel{#1\ }         
      {\raisebox{0pt}[4pt][0pt]{$\Longrightarrow$}}}} 
\newcommand{\plat}[1]{\raisebox{0pt}[0pt][0pt]{#1}}   
\newcommand{\weg}[1]{}                                
\newcommand{\Proc}{\mathbbm{P}}
\newcommand{\Tests}{\mathbbm{T}}
\newcommand{\Outcomes}{\mathbbm{O}}
\newcommand{\Apply}{\mathcal A\it pply}
\newcommand{\infr}[2]                                 
{\rule{0mm}{6mm} \begin{array}{c} #1\\[0.1ex] \hline \rule{0ex}{2.7ex}#2 \end{array}}
\newcommand{\RL}{L}                                   
\newcommand{\mylabel}[1]{\hypertarget{lab:#1}{\ \mbox{{\scriptsize\sc (#1)}}}}
\newcommand{\myref}[1]{\hyperlink{lab:#1}{\scriptsize\sc (#1)}}
\newcommand{\rec}[2][X]{{\bf fix}\llparenthesis#1{:}#2\rrparenthesis} 
\newcommand{\defis}{\stackrel{{\it def}}{=}}          
\newcommand{\dom}{{\it dom}}                          
\newcommand{\dcup}{\stackrel{\mbox{\huge .}}{\cup}}   
\newcommand{\Comp}{\textit{Comp}}                     
\newcommand{\ptr}{{\it ptraces}}                      
\newcommand{\deadlocks}{{\it deadlocks}}
\newcommand{\ct}{{\it CT}}                            
\newcommand{\infinite}{{\it inf}}                     
\newcommand{\infd}{{\it inf\!\!}_\bot}
\newcommand{\infdd}{{\it inf\!\!}_d}
\newcommand{\ini}{{\it initials}}
\newcommand{\failures}{{\it failures}}
\newcommand{\faild}{{\it failures\!}_\bot}
\newcommand{\faildd}{{\it failures\!}_d}
\newcommand{\diverg}{{\it divergences}}
\newcommand{\divd}{{\it divergences\!}_\bot}
\newcommand{\Hleq}{\sqsubseteq_{\rm Ho}}
\newcommand{\Sleq}{\sqsubseteq_{\rm Sm}}
\newcommand{\Mustleq}{\sqsubseteq_{\rm{must}}}
\newcommand{\Mayleq}{\sqsubseteq_{\rm{may}}}
\newcommand{\Musteq}{\equiv_{\rm{must}}}
\newcommand{\Mayeq}{\equiv_{\rm{may}}}
\begin{document}

\def\titlerunning{Reward Testing Equivalences for Processes}
\def\authorrunning{Rob van Glabbeek}
\title{\titlerunning}
\author{\authorrunning
\institute{Data61, CSIRO, Sydney, Australia}
\institute{School of Computer Science and Engineering,
University of New South Wales, Sydney, Australia}
\email{rvg@cs.stanford.edu}
}
\maketitle

\begin{abstract}
May and must testing were introduced by De Nicola and Hennessy to define semantic
equivalences on processes. May-testing equivalence exactly captures safety properties,
and must-testing equivalence liveness properties. This paper proposes \emph{reward testing}
and shows that the resulting semantic equivalence also captures conditional liveness properties.
It is strictly finer than both the may- and must-testing equivalence.
\vspace{3pt}
\end{abstract}

\begin{abstract}
\emph{This paper is dedicated to Rocco De Nicola, on the occasion of his 65$^{\it th}$ birthday.
Rocco's work has been a source of inspiration to my own.}
\end{abstract}

\section*{Introduction}

The idea behind semantic equivalences $\equiv$ and refinement preorders $\sqsubseteq$ on processes
is that $P\equiv Q$ says, essentially, that for practical purposes processes $P$ and $Q$ are equally
suitable, i.e.\ one can be replaced for by the other without untoward side effects. Likewise,
$P \sqsubseteq Q$ says that for all practical purposes under consideration, $Q$ is at least as suitable
as $P$, i.e.\ it will never harm to replace $P$ by $Q$.
To this end, $Q$ must have all relevant good properties that $P$ enjoys.
Among the properties that ought to be so \emph{preserved}, are \emph{safety properties}, saying that nothing bad will even happen,
and \emph{liveness properties}, saying that something good will happen eventually.

In the setting of the process algebra CCS, refinement preorders $\Mayleq$ and $\Mustleq$, and
associated semantic equivalences $\Mayeq$ and $\Musteq$, were proposed by De Nicola \& Hennessy in \cite{DH84}.
In \cite{vG10} I argue that $\Mayeq$ and $\Musteq$ are the coarsest equivalences that enjoy
some basic compositionality requirements\footnote{Namely being congruences for injective renaming and
  partially synchronous interleaving operators, or equivalently all operators of CSP, or
  equivalently the CCS operators parallel composition, restriction and relabelling.}
and preserve safety and liveness properties, respectively. Yet neither preserves so-called \emph{conditional liveness properties}.
\begin{figure}[h]
\vspace{-1ex}
\input{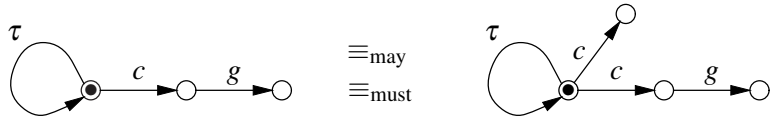}
\centerline{\raisebox{0em}{\box\graph}}
\caption{\it Processes identified  by may and must testing, but with different
  conditional liveness properties}
\label{conditional}
\end{figure}
This is illustrated in \fig{conditional}, showing two processes that are identified under both may and must testing.
From a practical point of view, the difference between
these two processes may be enormous. It could be that the action $c$
comes with a huge cost, that is only worth making when the good action $g$
happens afterwards. Only the right-hand side process is able to incur
the cost without any benefits, and for this reason it lacks an
important property that the left-hand process has. I call such
properties \emph{conditional liveness properties}.
A conditional liveness property says that
\begin{quote}\it
under certain conditions something good will eventually happen.
\end{quote}
This paper introduces a stronger form of testing that preserves conditional liveness properties.

\section{General setting}

It is natural to view the semantics of processes as being determined
by their ability to pass tests \cite{DH84,henn};
processes $P_1$ and $P_2$ are deemed to be semantically equivalent
unless there is a test which can distinguish them.  The actual tests
used typically represent the ways in which users, or indeed other
processes, can interact with $P_i$.
This idea can be formulated in the following general testing scenario \cite{DGHMZ07},
of which the testing scenarios of \cite{DH84,henn} are instances.  It assumes
\begin{itemize} 
\item a set of processes $\Proc$,\vspace{-2pt}
\item a set of tests $\Tests$, which can be applied to processes,\vspace{-2pt}
\item a set of outcomes $\Outcomes$, the possible results from applying a test to a process, and\vspace{-2pt}
\item a function $\Apply: \Tests \times \Proc \rightarrow \Pow^+(\Outcomes)$, representing
the possible results of applying a specific test to a specific process.
\end{itemize}
Here $\Pow^+(\Outcomes)$ denotes the collection of non-empty
subsets of $\Outcomes$; so the result of applying a test $T$ to a
process $P$, $\Apply(T,P)$, is in general a \emph{set} of outcomes,
representing the fact that the behaviour of processes, and indeed
tests, may be nondeterministic.

Moreover, some outcomes are considered better then others; for example
the application of a test may simply succeed, or it may fail, with
success being better than failure.  So one can assume that $\Outcomes$ is
endowed with a partial order, in which $o_1 \leq o_2$ means that $o_2$
is a better outcome than $o_1$.

When comparing the result of applying tests to processes one needs to
compare subsets of $\Outcomes$. There are two standard approaches to
make this comparison, based on viewing these sets as elements of
either the Hoare or Smyth powerdomain \cite{Hen82,AJ94} of $\Outcomes$.
For $O_1, O_2 \in \Pow^+(\Outcomes)$ let
\begin{enumerate}[(i)]
\item $O_1 \Hleq O_2$ if for every $o_1 \in O_1$ there exists some $o_2 \in O_2$
such that $o_1 \leq o_2$

\item $O_1 \Sleq O_2$ if for every $o_2 \in O_2$ there exists some $o_1 \in O_1$
such that $o_1 \leq o_2$.
\end{enumerate}
Using these two comparison methods one obtains two different semantic
preorders for processes:
\begin{enumerate}[(i)]
\item 
For $P,Q \in \Proc$ let $P \mathbin{\Mayleq} Q$ if
$\Apply(T,P) \mathbin{\Hleq} \Apply(T,Q)$ for every test $T$

\item 
Similarly, let  $P \Mustleq Q$ if
$\Apply(T,P) \Sleq \Apply(T,Q)$  for every test $T$.
\end{enumerate}  
Note that $\Mayleq$ and $\Mustleq$ are reflexive and transitive, and hence preorders.
I use $P \Mayeq Q$ and $P \Musteq Q$ to denote the associated equivalences.

{\newcommand{\PupCl}[1]{\mbox{P}^\uparrow\kern -.1em (#1)}
The terminology \emph{may} and \emph{must} refers to the following
reformulation of the same idea.  Let ${\it Pass} \subseteq \Outcomes$
be an upwards-closed subset of $\Outcomes$, i.e.\ satisfying
$o'\geq o\in{\it Pass} \Rightarrow o'\in {\it Pass}$, thought of as
the set of outcomes that can be regarded as \emph{passing} a
test. Then one says that a process $P$ \emph{may} pass a test $T$ with
an outcome in {\it Pass}, notation ``$P$ {\bf may} ${\it Pass}~T$'',
if there is an outcome $o\in\Apply(P,T)$ with $o\in\textit{Pass}$,
and likewise $P$ \emph{must} pass a test $T$ with an outcome in {\it
Pass}, notation \mbox{``$P$ {\bf must} ${\it Pass}~T$''}, if for all
$o\in\Apply(P,T)$ one has $o\in\textit{Pass}$.  Now
\begin{eqnarray*}
P \Mayleq Q \mbox{~~iff~~} \forall T\mathbin\in\Tests\; \forall \textit{Pass}
 \in \PupCl{\Outcomes}\, (P~\textbf{may}~\textit{Pass}~T
 ~\Rightarrow~ Q~\textbf{may}~\textit{Pass}~T)~~\\
P \Mustleq Q \mbox{~~iff~~} \forall T\mathbin\in\Tests\; \forall \textit{Pass}
 \in \PupCl{\Outcomes}\, (P~\textbf{must}~\textit{Pass}~T
 ~\Rightarrow~ Q~\textbf{must}~\textit{Pass}~T)
\end{eqnarray*}
where $\PupCl{\Outcomes}$ is the set of upwards-closed
subsets of $\Outcomes$.
}

The original theory of testing \cite{DH84,henn} is obtained by using
as the set of outcomes $\Outcomes$ the two-point lattice\vspace{-1ex}
\begin{center}
  \begin{picture}(40,50)(0,0)
  \linethickness{0.5pt} 
  \put(20,0){$\bot$}
  \put(20,45){$\top$}
  \put(24.2,13.7){\line(0,1){25}}
  \end{picture}
\end{center}
with $\top$ representing the success of a test application, and $\bot$ failure. 

\section{CCS: The Calculus of Communicating Systems}\label{sec:CCS}

{\renewcommand{\eta}{\alpha}
\renewcommand{\ell}{\alpha}

\begin{table*}[t]
\caption{Structural operational semantics of CCS}
\label{tab:CCS}
\normalsize
\begin{center}
\framebox{$\begin{array}{c@{}c@{\qquad}c}
\alpha.E \goto{\alpha} E  \mylabel{Act} &
\displaystyle\frac{E_j \goto{\alpha} E_j'}{\sum_{i\in I}E_i \goto{\alpha} E_j'}\makebox[2em][l]{~~\small($j\in I$)  \mylabel{Sum}}\\[4ex]
\displaystyle\frac{E\goto{\eta} E'}{E|F \goto{\eta} E'|F} \mylabel{Par-l}&
\displaystyle\frac{E\goto{a} E' ,~ F \goto{\bar{a}} F'}{E|F \goto{\tau} E'| F'} \mylabel{Comm}&
\displaystyle\frac{F \goto{\eta} F'}{E|F \goto{\eta} E|F'} \mylabel{Par-r}\\[4ex]
\displaystyle\frac{E \goto{\ell} E'}{E\backslash \RL \goto{\ell}E'\backslash \RL}~~(\ell,\bar{\ell}\not\in\RL) \mylabel{Res}&
\displaystyle\frac{E \goto{\ell} E'}{E[f] \goto{f(\ell)} E'[f]} \mylabel{Rel} &
\displaystyle\frac{\rec[S_X]{S} \goto{\alpha} E}{\rec{S}\goto{\alpha}E} \mylabel{Rec}
\end{array}$}
\end{center}
\end{table*}

\noindent
CCS \cite{ccs} is parametrised with a set $\Ch$ of \emph{names};
\plat{$Act := \Ch\dcup\bar\Ch\dcup \{\tau\}$} is the set of {\em actions}, where $\tau$ is a special \emph{internal action}
and $\bar{\Ch} := \{ \bar{c} \mid c \in \Ch\}$ is the set of \emph{co-names}.
Complementation is extended to $\bar\Ch$ by setting $\bar{\bar{\mbox{$c$}}}=c$.
Below, $a$ ranges over $\A:=\Ch\cup\bar\Ch$ and $\alpha$ over $Act$.
A \emph{relabelling} is a function $f\!:\Ch\mathbin\rightarrow \Ch$; it extends to $Act$ by
$f(\bar{c})\mathbin=\overline{f(c)}$ and $f(\tau):=\tau$.
Let $\V$ be a set $X$, $Y$, \ldots of \emph{process variables}.
The set $\IE_{\rm CCS}$ of CCS expressions is the smallest set including:
\begin{center}
\begin{tabular}{@{}lll@{}}
$\alpha.E$  & for $\alpha\mathbin\in Act$ and $E\mathbin\in\IE_{\rm CCS}$ & \emph{action prefixing}\\
$\sum_{i\in I}E_i$ & for $I$ an index set and $E_i\mathbin\in\IE_{\rm CCS}$ & \emph{choice} \\
$E|F$ & for $E,F\mathbin\in\IE_{\rm CCS}$ & \emph{parallel composition}\\
$E\backslash \RL$ & for $\RL\subseteq\Ch$ and $E\mathbin\in\IE_{\rm CCS}$ & \emph{restriction} \\
$E[f]$ & for $f$ a relabelling and $E\mathbin\in\IE_{\rm CCS}$ & \emph{relabelling} \\
$X$ & for $X\mathbin\in\V$ & \emph{process variable}\\
$\rec{S}$ & for $S\!:\V\mathord\rightharpoonup \IE_{\rm CCS}$ and $X\in \dom(S)$ & \emph{recursion}.
\end{tabular}
\end{center}
The expression $\sum_{i\in \{1,2\}}\!\alpha_i.E_i$ is often written as $\alpha_1.E_1 {+} \alpha_2.E_2$,
$\sum_{i\in \{1\}}\!\alpha_i.E_i$ as $\alpha_1.E_1$, and $\sum_{i\in \emptyset}\!\alpha_i.E_i$ as ${\bf 0}$.\linebreak[3]
Moreover, one abbreviates $\alpha.{\bf 0}$ by $\alpha$, and $P\backslash\{c\}$ by $P\backslash c$.\vspace{2pt}
A partial function $S\!:\V\mathbin\rightharpoonup \IE_{\rm CCS}$ is called a
\emph{recursive specification},
and traditionally written as \plat{$\{Y\defis S(Y)\mid Y\mathbin\in \dom(S)\}$}.
A CCS expression $E$ is \emph{closed} if each occurrence of a process variable $Y$ in $E$ lays within a
subexpression $\rec{S}$ of $E$ with $Y\mathbin\in\dom(S)$; $\Proc_{\rm CCS}$, ranged over by
$P,Q,\dots$, denotes the set of closed CCS expressions, or \emph{processes}.

The semantics of CCS is given by the labelled transition relation
$\mathord\rightarrow \subseteq \Proc_{\rm CCS}\times Act \times\Proc_{\rm CCS}$, where transitions 
\plat{$P\ar{\ell}Q$} are derived from the rules of \autoref{tab:CCS}.
Here $\rec[S_X]{S}$ denotes the expression $S(X)$ (written $S_X$) with $\rec[Y]{S}$ substituted for
each free occurrence of $Y\!$, for all $Y\in\dom(S)$, while renaming bound variables in $S_X$ as
necessary to avoid name-clashes.

The process $\alpha.P$ performs the action $\alpha$ first and subsequently acts as $P$.
The choice operator $\sum_{i\in I}P_i$ may act as any of its arguments $P_i$, depending on which of these processes is able to act at all.
The parallel composition $P|Q$ executes an action from $P$, an action from $Q$, or in the case where
$P$ and $Q$ can perform complementary actions $a$ and $\bar{a}$, the process can perform a synchronisation, resulting in an internal action $\tau$.
The restriction operator $P \backslash \RL$
inhibits execution of the actions from $\RL$ and their complements. 
The relabelling $P[f]$ acts like process $P$ with all labels $\ell$ replaced by $f(\ell)$.
Finally, the rule for recursion says that a recursively defined process $\rec{S}$ behaves exactly as
the body $S_X$ of the recursive equation \plat{$X\defis S_X$}, but 
with recursive calls $\rec[Y]{S}$ substituted for the variables $Y\in\dom(S)$.

\section{Classical may and must testing for CCS}\label{sec:classical}

Let $Act^\omega:= Act \cup\{\omega\}$, where $\omega \notin Act$ is a special action reporting success.
A CCS \emph{test} $T\in\IT_{\rm CCS}$ is defined just like a CCS process, but with $\alpha$ ranging over $Act^\omega$.
So a CCS process is a special kind of CCS test, namely one that never performs the action $\omega$.
To apply the test $T$ to the process $P$ one runs them in parallel; that is, one runs the combined
process $T | P$---which is itself a CCS test.

\begin{definition}{computation}
A \emph{computation} $\pi$ is a finite or infinite sequence $T_0,T_1,T_2,\dots$ of tests, such that
(i) if $T_n$ is the final element in the sequence, then \plat{$T_n \goto\tau T$} for no $T$, and
(ii) otherwise \plat{$T_n\goto{\tau} T_{n+1}$}.

A computation $\pi$ is \emph{successful} if it contains a state $T$ with $T \goto{\omega} T'$ for some $T'$.

For $T\in\Tests_{\rm CCS}$, $P\in\Proc_{\rm CCS}$, let $\Comp(T,P)$ be the set of computations whose initial
element is $T|P$.

Let $\Apply(T,P):=\{\top \mid \exists \mbox{ successful $\pi\in \Comp(T,P)$}\}
    \cup          \{\bot \mid \exists \mbox{ unsuccessful $\pi\in \Comp(T,P)$}\}$.
\end{definition}
Using this definition of $\Apply$ it follows that $P \Mayleq Q$ holds unless there is a test $T$
such that $T|P$ has (that is, is the initial state of) a successful computation but $Q$ has not.
Likewise $P \Mustleq Q$  holds unless there is a test $T$
such that $T|P$ has only successful computations but $Q$ has not.

\section{Dual may and must testing}\label{sec:dual}

A \emph{liveness property} \cite{Lam77} is a property that says that \emph{something good will eventually happen}.
In the context of CCS, any test $T$ can be regarded to specify a liveness property; a process $P$
is defined to have this property iff all computations of $T|P$ are successful.
Now $P \Mustleq Q$ holds iff all liveness properties $T\in\Tests_{\rm CCS}$ that are enjoyed by $P$
also hold for $Q$.

A \emph{safety property} \cite{Lam77} is a property that says that \emph{something bad will never happen}.
When thinking of the special action $\omega$ as reporting that something bad has occurred, rather
than something good, any test $T$ can also be regarded to specify a safety property; a process $P$
is defined to have this property iff none of the computations of $T|P$ are catastrophic; here
\emph{catastrophic} is simply another word for ``successful'', when reversing the connotation of $\omega$.
Now $Q \Mayleq P$ holds iff all safety properties $T\in\Tests_{\rm CCS}$ that are enjoyed by $P$
also hold for $Q$.

A \emph{labelled transition system} (LTS) over a set $Act$ is a pair $(\Proc,\rightarrow)$ where $\Proc$ is a set of
\emph{processes} or \emph{states} and ${\rightarrow}\subseteq \Proc \times Act \times \Proc$ a set of \emph{transitions}.
In \cite{vG10} preorders $\sqsubseteq_{\it liveness}$ and $\sqsubseteq_{\it safety}$ are defined on
LTSs. Specialised to the LTS $(\Proc_{\rm CCS},\rightarrow)$ induced by CCS,
$\sqsubseteq_{\it liveness}$ coincides with $\Mustleq$, and $\sqsubseteq_{\it safety}$ is exactly the
reverse of $\Mayleq$, in accordance with the reasoning above.

To explain the reversal of $\Mayleq$ when dealing with safety properties,
I propose a variant of CCS testing where in \df{computation} the word ``catastrophic'' is used
for ``successful'' and $\Apply$ is redefined by\vspace{-1pt}
\[\Apply(T,P):=\{\bot \mid \exists \mbox{ catastrophic $\pi\in \Comp(T,P)$}\}
 \cup          \{\top \mid \exists \mbox{ uncatastrophic $\pi\in \Comp(T,P)$}\}.\vspace{-1pt}\]
An equivalent alternative to redefining $\Apply$ is to simply invert the order between $\bot$ and $\top$.
Let \plat{$\Mayleq^{\rm dual}$} and  $\Mustleq^{\rm dual}$ be the versions of the may- and must-testing
preorders obtained from this alternative definition.
It follows immediately from the definitions that $P \Mayleq^{\rm dual} Q$ iff $Q \Mustleq P$
and that $P \Mustleq^{\rm dual} Q$ iff $Q \Mayleq P$.
Based on this, it may be more accurate to say that  $\sqsubseteq_{\it safety}$ coincides with \plat{$\Mustleq^{\rm dual}$}.

A \emph{possibility property} \cite{Lam98}  is a property that says that \emph{something good might eventually happen}.
A test $T$ can be regarded to specify a possibility property; a process $P$
is defined to have this property iff some computation of $T|P$ is successful.
Now $P \Mayleq Q$ holds iff all possibility properties $T\in\Tests_{\rm CCS}$ that are enjoyed by $P$
also hold for $Q$. Lamport argues that ``verifying possibility
properties tells you nothing interesting about a system'' \cite{Lam98}.
As an example, consider the following models of coffee machines:\vspace{-1pt}
$$C_1:= \tau   \qquad\qquad  C_2:= \tau.c+\tau   \qquad\qquad C_3:=\tau.c\pagebreak[2]$$
where $c$ is the act of dispensing coffee. The machine $C_1$ surely will not make coffee, $C_2$
makes a nondeterministic choice between making coffee or not, and $C_3$ surely makes coffee.
Under may testing, systems $C2$ and $C_3$ are equivalent---both have the possibility of making
coffee---and each of them is better than $C_1$: $C_1 \sqsubset_{\rm may} C_2 \Mayeq C_3$.
The relevance of this indeed is questionable. It takes must testing to formalise that $C_3$ is
better than $C_2$: only $C_3$ guarantees that coffee will eventually be dispensed.

When employing dual testing, the same example applies, but with $c$ denoting a catastrophe.
Now $C_1$ is safe, whereas $C_2$ and $C_3$ are not:  $C_1 \sqsupset^{\rm dual}_{\rm must} C_2 \equiv^{\rm dual}_{\rm must} C_3$.
Dual may testing would argue that $C_2$ is better than $C_3$ because a catastrophe might be avoided.
This however, can be deemed a weak argument.

In view of these considerations, I will focus on the preorders $\Mustleq$ and $\Mustleq^{\rm dual}$
(or $\sqsubseteq_{\it safety}$).
The (dual) may preorders simply arise as their inverses, and hence do not require explicit treatment.

\section{Reward testing for CCS}

A CCS \emph{reward test} is defined just like a CCS process, but with $\alpha$ ranging over
$Act\times\IR$, the \emph{valued actions}.
A valued action is an action tagged with a real number, the \emph{reward} for executing this action.
A negative reward can be seen as a penalty.
Let $\Tests^R_{\rm CCS}$ be the set of CCS reward tests.
The structural operational semantics for CCS reward tests has the following modified rules:
\begin{center}
\framebox{$\begin{array}{c@{\qquad}c@{\qquad}c}
\displaystyle\frac{P\ar{a,r} P' ,~ Q \ar{\bar{a},r'} Q'}{P|Q \ar{\tau,r+r'} P'| Q'} \mylabel{Comm$'$}&
\displaystyle\frac{P \ar{\ell,r} P'}{P\backslash \RL \ar{\ell,r}P'\backslash \RL}~~(\ell,\bar{\ell}\not\in\RL) \mylabel{Res$'$}&
\displaystyle\frac{P \ar{\ell,r} P'}{P[f] \ar{f(\ell),r} P'[f]} \mylabel{Rel$'$}
\end{array}$}
\end{center}
Thus, in synchronising two actions one reaps the rewards of both.
In all other rules of \tab{CCS}, $\alpha$ is simply replaced by $\alpha,r$, with $r\in\IR$.
A valued action $\alpha,0$ is simply denoted $\alpha$,
so that a CCS process can be seen as a special CCS reward test, namely one in which all rewards are $0$.
To apply a reward test $T$ to a process $P$ one again runs them in parallel.
}

\begin{definition}{reward computation}
A \emph{reward computation} $\pi$ is a finite or infinite sequence $T_0,r_1,T_1,r_2,T_2\dots$ of reward tests, such that
(i) if $T_n$ is the final element in $\pi$, then \plat{$T_n \ar{\tau,r} T$} for no $r$ and $T$, and
(ii) otherwise \plat{$T_n\ar{\tau,r_{n+1}} T_{n+1}$}.

The \emph{reward} of a finite computation $\pi$ ending in $T_n$ is $\sum_{i=1}^n r_i$.
The \emph{reward} of an infinite computation  $T_0,r_1,T_1,r_2,T_2\dots$
is\vspace{-2ex} $$\displaystyle\inf_{n\rightarrow\infty}\sum_{i=1}^n r_i \qquad \in \IR\cup\{-\infty,\infty\}.$$
For $T\in\Tests^R_{\rm CCS}$, $P\in\Proc_{\rm CCS}$, let $\Comp^R(T,P)$ be the set of reward computations with initial
element $T|P$.
\\
Let $\Apply(T,P):=\{\textit{reward}(\pi) \mid \pi\in \Comp^R(T,P)\}$.
\end{definition}
This defines reward preorders $\sqsubseteq^{\rm may}_{\rm reward}$ and $\sqsubseteq^{\rm must}_{\rm reward}$ on $\Proc_{\rm CCS}$.
It will turn out that $P \sqsubseteq^{\rm may}_{\rm reward} Q$ iff $Q \sqsubseteq^{\rm must}_{\rm reward} P$.
As a consequence I will focus on \plat{$\sqsubseteq^{\rm must}_{\rm reward}$}, and simply call it $\sqsubseteq_{\rm reward}$.

\section{Characterising reward testing}\label{sec:explicit}

Assuming a fixed LTS $(\mathbb{P},\rightarrow)$, labelled over a set $Act = \A \dcup \{\tau\}$,
the ternary relation $\mathord{\dto\_} \subseteq \mathbb{P} \times \A^*
\times \mathbb{P}$ is the least relation satisfying\\[2pt]
\mbox{}
\hfill
  $P \dto \epsilon P$\enskip,
\hfill
 $\infr{P \goto \tau Q}{P \dto \epsilon Q}$\enskip,
\hfill
 $\infr{P \goto a Q,~ a \not= \tau}{P \dto {a} Q}$
\hfill and \hfill
 $\infr{P \dto \sigma Q \dto \rho r}{P \dto {\sigma\rho} r}$
\enskip.
\hfill\mbox{}\\[2pt]
For $\sigma\in\A^*$ and $\nu\in\A^*\cup\A^\infty$
write $\sigma\leq\nu$ for ``$\sigma$ is a prefix of~$\rho$'', i.e.\
``$\exists \rho\;.\sigma\rho=\nu$''.
\pagebreak[3]

\begin{definition}{traces}
Let $P\in \mathbb{P}$.
\begin{itemise}\vspace{-1ex}
\item $a_1 a_2 a_3 \cdots \in \A^\infty$ is an \emph{infinite trace}
  of $P$ if there are $P_1,P_2,\ldots$ such that
  $P\dto{a_1}P_1\dto{a_2}P_2\dto{a_3} \cdots$.
\item $\infinite(P)$ denotes the set of infinite traces of $P$.
\item $P$ \emph{diverges}, notation $P{\Uparrow}$, if there are
  $P_i\in\mathbb{P}$ for all $i>0$ such that
  \plat{$P\goto{\tau}P_1\goto{\tau}P_2\goto{\tau} \cdots$}.
\item \plat{$\diverg(P):= \{\sigma \in \A^* \mid \exists Q.\; P \dto \sigma Q {\Uparrow}\}$}
      is the set of \emph{divergence traces} of $P$.
\item \plat{$\ini(P):=\{\alpha\in \A \mid \exists Q.\; P \goto{\alpha} Q\}$}.
\item \plat{$\deadlocks(P):= \{\sigma \in \A^* \mid \exists Q.\; P \dto\sigma Q \wedge \ini(Q)=\emptyset\}$}
      is the set of \emph{deadlock traces} of $P$.
\item $\ct(P):=\infinite(P)\cup\diverg(P)\cup\deadlocks(P)$
      is the set of \emph{complete traces} of $P$.
\item \plat{$\ptr(P) := \{\sigma \in \A^* \mid \exists Q.\; P \dto \sigma Q\}$}
      is the set of \emph{partial traces} of $P$.
\item \plat{$\failures(P):=\{\langle \sigma, X\rangle \in \A^*\times\Pow(\A) \mid
  \exists Q.\; P \dto{\sigma} Q \wedge \ini(Q) \cap (X\cup\{\tau\})=\emptyset\}$}.
\item $\faildd(P):= \failures(P) \cup \{\langle\sigma,X\rangle \mid
  \sigma\in\diverg(P)\wedge X\subseteq \A\}$.
\item $\infdd(P):= \infinite(P) \cup \{\nu \in\A^\infty \mid
  \forall\sigma{<}\nu\; \exists \rho\in\diverg(P).\; \sigma\leq\rho<\nu\}$.
\item $\divd(P):= \{\sigma\rho \mid \sigma\in\diverg(P) \wedge \rho\in \A^*\}$.
\item $\infd(P):= \infinite(P) \cup \{\sigma\nu \mid \sigma\in\diverg(P) \wedge \nu\in \A^\infty\}$.
\item $\faild(P):= \failures(P) \cup \{\langle\sigma\rho,X\rangle \mid
  \sigma\in\diverg(P) \wedge \rho\in \A^* \wedge X\subseteq \A\}$.
\end{itemise}
\end{definition}
Note that \hfill $\ptr(R)=\{\sigma\mid \langle\sigma,\emptyset\rangle \in \faildd(R)\}$ \hfill for any $R\in \Proc$. \hfill \hypertarget{star}{(*)}
\\[1ex]
A \emph{path} of a process $P\in\Proc$ is an alternating sequence
$P_0\,\alpha_1\,P_1\,\alpha_2\,P_2\cdots$ of processes/states and actions, starting with a state and
either being infinite or ending with a state, such that $P_i \ar{\alpha_{i+1}} P_{i+1}$ for all relevant~$i$.
Let $l(\pi):= \alpha_1 \alpha_2 \cdots$ be the sequence of actions in $\pi$, and $\ell(\pi)$ the
same sequence after all $\tau$s are removed.
Now $\sigma \in \infinite(P) \cup \diverg(P)$ iff $P$ has an infinite path $\pi$ with $\ell(\pi)=\sigma$.
Likewise, $\sigma \in \ptr(P)$ iff $P$ has a finite path $\pi$ with $\ell(\pi)=\sigma$.
Finally, $\sigma \in \infinite(P) \cup \ptr(P)$ iff $P$ has an path $\pi$ with $\ell(\pi)=\sigma$.

\newcommand{\startingfrom}{of\ }%
Any transition \plat{$P|Q \ar{\alpha} R$} derives, through the
rules of \tab{CCS}, from
\begin{itemise}
\vspace{-1ex}
\item a transition \plat{$P \ar{\alpha} P'$} and a state $Q$, where $R=P'|Q$\,,
\item two transitions \plat{$P \ar{a_1} P'$ and $Q \ar{\bar a_2} Q'$}, where $R=P'|Q'$\,,
\item or from a state $P$ and a transition \plat{$Q \ar{\alpha} Q'$}, where $R=P|Q'$.
\vspace{-1ex}
\end{itemise}
This transition/state, transition/transition or state/transition pair is called a 
\emph{decomposition} of \plat{$P|Q \ar{\alpha} R$}; it need not be unique.
Now a \emph{decomposition} of a path $\pi$ \startingfrom $P|Q$ into paths $\pi_1$ and $\pi_2$
\startingfrom $P$ and $Q$, respectively, is obtained by decomposing each transition in the path, and
concatenating all left-projections into a path \startingfrom $P$ and all right-projections into a
path \startingfrom $Q$---notation $\pi \in \pi_1 | \pi_2$ \cite{GH15a}.
Here it could be that $\pi$ is infinite, yet either $\pi_1$ or $\pi_2$ (but not both) are finite.
Again, decomposition of paths need not be unique.

\begin{theorem}{reward characterisation}
Let $P,Q\in\Proc_{\rm CCS}$. Then
$P \sqsubseteq_{\rm reward} Q ~~\Leftrightarrow~~
\begin{array}[t]{@{}r@{~\supseteq~}l@{}}
\diverg(P) & \diverg(Q) \wedge \mbox{}\\
\infinite(P) & \infinite(Q) \wedge \mbox{}\\
\faildd(P) & \faildd(Q).
\end{array}$
\end{theorem}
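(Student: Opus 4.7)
The proof has two directions, both hinging on a decomposition lemma for $T\,|\,R$ computations: any $\tau$-transition of $T\,|\,R$ is of type Par-l, Par-r, or Comm$'$, and since every process action of $R$ carries reward $0$ (and Comm$'$ adds rewards) the reward of a computation of $T\,|\,R$ is determined entirely by the test-projection~$\pi_T$. Consequently, if $\pi_Q\in\Comp^R(T,Q)$ decomposes as $\pi_T\,|\,\pi_Q'$, then any computation of $T\,|\,P$ of the form $\pi_T\,|\,\pi_P'$ exhibits the same reward-bearing steps, so its $\liminf$ of partial sums corresponds one-to-one with that of $\pi_Q$.

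For soundness ($\Leftarrow$), given $\pi_Q\in\Comp^R(T,Q)$ of reward $r_Q$, I would construct $\pi_P\in\Comp^R(T,P)$ of reward $r_P\leq r_Q$ by case analysis on the $Q$-projection. If $\pi_Q'$ is infinite with infinite $\ell(\pi_Q')=\nu$, then $\nu\in\infinite(Q)\subseteq\infinite(P)$ furnishes a matching $\pi_P'$. If $\pi_Q'$ is infinite but $\sigma:=\ell(\pi_Q')$ is finite, then $\sigma\in\diverg(Q)\subseteq\diverg(P)$ supplies $P\dto\sigma P'{\Uparrow}$, and I would let $\pi_P'$ execute $\sigma$ and then loop in $P'$'s $\tau$-divergence. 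If $\pi_Q'$ is finite and ends in stable~$Q^*$, I would set $X=\overline{\ini(T^*)\cap\A}$ and invoke $\langle\ell(\pi_Q'),X\rangle\in\failures(Q)\subseteq\faildd(Q)\subseteq\faildd(P)$: this yields either a matching stable $P^*$ (making $\pi_P$ terminal with the same reward) or $\ell(\pi_Q')\in\diverg(P)$ (making $\pi_P$ infinite, with partial sums that agree with $\pi_Q$'s and then remain constant at $r_Q$, so $\liminf=r_Q$).

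For completeness ($\Rightarrow$), I would argue contrapositively, constructing for each failed condition a separating reward test. For $\nu=a_1a_2\cdots\in\infinite(Q)\setminus\infinite(P)$ take the recursive test $T=\rec[X_1]{\{X_k\defis(\bar a_k,-1).X_{k+1}\mid k\geq 1\}}$: synchronising with $Q$'s infinite trace drives the reward to $-\infty$, while any $T\,|\,P$-computation can synchronise only finitely often (else $\nu\in\infinite(P)$), so its reward is finite. For $\sigma\in\diverg(Q)\setminus\diverg(P)$ take $T=(\bar a_1,-1)\cdots(\bar a_n,-1).(\tau,1).\mathbf{0}$: declining to fire $(\tau,1)$ while $Q$ performs its post-$\sigma$ $\tau$-divergence gives a $T\,|\,Q$-computation of reward $-n$, whereas every $T\,|\,P$-computation either fails before $\sigma$ (reward $-k>-n$ for some $k<n$) or completes $\sigma$ and is eventually forced to fire $(\tau,1)$, because every post-$\sigma$ state of $P$ is non-divergent (reward $-n{+}1$). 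The failure condition $\langle\sigma,X\rangle\in\faildd(Q)\setminus\faildd(P)$ splits via observation~(*) and the preceding two tests into the sub-case $\sigma\in\ptr(Q)\setminus\ptr(P)$ (use $T=\bar a_1\cdots\bar a_n.\rec[Y]{\{Y\defis(\tau,-1).Y\}}$: $Q$ reaches the loop and drives reward to $-\infty$, $P$ never does) and the sub-case where $Q$ has a stable state after $\sigma$ but $P$ has none (use $T=(\bar a_1,-1)\cdots(\bar a_n,-1).\sum_{b\in X}\bar b.(\tau,1).\mathbf{0}$ with the same argument).

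The hardest part is ensuring that the distinguishing tests actually separate in view of the $\liminf$-definition of reward: both $T\,|\,P$ and $T\,|\,Q$ share the dip to $-n$ during the $\sigma$-matching phase, so the test must include a \emph{release} step --- the $(\tau,1)$-action --- that $P$ is compelled to take, thereby lifting the $\liminf$ of $P$'s partial sums strictly above $-n$, while $Q$ can avoid it by remaining in its $\tau$-divergence. In the infinite-trace case no single release step suffices: one needs a genuinely recursive test whose unbounded accumulation of $-1$'s over $\nu$ sends $Q$'s reward to $-\infty$, a value $P$ cannot realise with only finitely many syncs. This $-\infty$ phenomenon is exactly what separates reward testing from classical must testing and motivates the extra $\infinite(P)\supseteq\infinite(Q)$ clause in the characterisation.
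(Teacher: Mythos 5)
Your overall architecture matches the paper's: an explicit preorder, a completeness direction via distinguishing tests for each failed inclusion, and a soundness direction via decomposing a computation of $T|Q$ into a test projection and a $Q$-projection, using the key observation that the reward is carried entirely by the test projection. Your tests for the infinite-trace and divergence cases work (the latter differs harmlessly from the paper's, which keeps the $\bar\sigma$-prefix at reward $0$ and uses a single $\tau^{-1}$ before the release step). However, your distinguishing test for the failure case is broken. In $T=(\bar a_1,-1)\cdots(\bar a_n,-1).\sum_{b\in X}\bar b.(\tau,1).{\bf 0}$ the compensating $+1$ sits on a $\tau$-step \emph{after} the $\bar b$-synchronisation, so it can be dodged whenever $P$ diverges after $\sigma b$ --- a situation about which the hypothesis $\langle\sigma,X\rangle\notin\faildd(P)$ says nothing. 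Concretely, take $Q=a.{\bf 0}$ and $P=a.b.\Delta{\bf 0}$ with $\langle a,\{b\}\rangle\in\faildd(Q)\setminus\faildd(P)$: then $T|Q$ realises reward $-1$, but so does $T|P$, by synchronising on $a$ and $b$ and then following the $\tau$-loop of $\Delta{\bf 0}$ forever instead of firing $(\tau,1)$; hence $\Apply(T,P)\Sleq\Apply(T,Q)$ and the test separates nothing. The paper avoids this by attaching the positive reward to the synchronising action itself ($\sum_{a\in X}a^1$), so that the reward is banked the instant $P$ is forced to leave the refusal set, irrespective of any subsequent divergence. (Your divergence-case test does not suffer from this because there the release $\tau$ follows immediately after $\sigma$, and $\sigma\notin\diverg(P)$ is exactly what forbids dodging it.) A lesser inaccuracy in the same case: the residual sub-case is not ``$P$ has no stable state after $\sigma$'' but ``every state of $P$ after $\sigma$ can perform some action in $X\cup\{\tau\}$''; your test targets the right condition, but the prose does not.

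There is also a missing case in your soundness direction. Your trichotomy on the $Q$-projection omits the situation where $\pi_Q'$ is finite but the computation is infinite because the \emph{test} projection is infinite (the test $\tau$-diverges on its own). Then the final state $Q^*$ need not be stable, so the failures argument does not apply; what is needed is $\ell(\pi_Q')\in\ptr(Q)\subseteq\ptr(P)$, which follows from $\faildd(P)\supseteq\faildd(Q)$ via the observation $\ptr(R)=\{\sigma\mid\langle\sigma,\emptyset\rangle\in\faildd(R)\}$. This is the paper's Case~2 and is easily repaired, but as written your case analysis is not exhaustive.
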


\begin{proof}
Let $\sqsubseteq_{\it NDFD}$ be the preorder defined by:
$P\sqsubseteq_{\it NDFD} Q$ iff the right-hand side of \thm{reward characterisation} holds.

For $\sigma = a_1 a_2 \cdots a_n \in \A^*$, let $\bar\sigma.T$ with $T\in\Tests^R_{\rm CCS}$ be the CCS reward test
$\bar a_1 . \bar a_2 . \cdots \bar a_1 . T$. It starts with performing the complements of the actions
in $\sigma$, where each of these actions is given a reward $0$. 

Write $\alpha^r$ for $(\alpha,r)\in Act\times\IR$.
For $\nu = a_1 a_2 a_3 \cdots \in \A^\infty$, let $\bar \nu^r$ be the CCS reward test $\rec[X_0]{S}$
where \plat{$S=\{X_i \defis \bar a_{i+1}^r. X_{i+1}\mid i\geq 0\}$}. This test simply performs the infinite sequence
of complements of the actions in $\nu$, where each of these actions is given a reward $r$.
\vspace{1ex}

\noindent
``$\Rightarrow$'':
Suppose $P \not\sqsubseteq_{\it NDFD} Q$.

Case 1: Let $\sigma \in \diverg(Q)\setminus \diverg(P)$.
Take $T:=\bar\sigma.\tau^{-1}.\tau^1 \in \Tests^R_{\rm CCS}$.
Then $T|Q$ has a computation $\pi$ with $\textit{reward}(\pi)<0$,
whereas $T|P$ has no such computation. Hence $P \not\sqsubseteq_{\rm reward} Q$.

Case 2: Let $\nu \in \infinite(Q)\setminus \infinite(P)$.
Take $T:=\bar\nu^{-1} \in  \Tests^R_{\rm CCS}$.
Then $T|Q$ has a computation $\pi$ with $\textit{reward}(\pi)=-\infty$,
whereas $T|P$ has no such computation. Hence $P \not\sqsubseteq_{\rm reward} Q$.

Case 3: Let $\langle \sigma,X\rangle \in \faildd(Q)\setminus \faildd(P)$.
Take $T:=\bar\sigma.\tau^{-1}.\sum_{a\in X} a^1 \in \Tests^R_{\rm CCS}$.
Then $T|Q$ has a computation $\pi$ with $\textit{reward}(\pi)<0$,
whereas $T|P$ has no such computation. Hence $P \not\sqsubseteq_{\rm reward} Q$.
\vspace{1ex}

\noindent
``$\Leftarrow$'': 
Suppose $P \sqsubseteq_{\it NDFD} Q$.
Let $T\in\Tests^R_{\rm CCS}$ and $r\in\IR$ be such that $\exists \pi\in\Comp(T|Q)$ with
$\textit{reward}(\pi)=r$.
It suffices to find a $\pi'\in\Comp(T|P)$ with $\textit{reward}(\pi')\leq r$.
The computation $\pi$ can be seen as a path of $T|Q$ in which all actions are $\tau$.
Decompose this path into paths $\pi_1$ of $T$ and $\pi_2$ of $Q$.
Note that $\textit{reward}(\pi)=\textit{reward}(\pi_1)$.

Case 1: Let $\pi_2$ be infinite. Then $\ell(\pi_2) \in \infinite(Q) \cup\diverg(Q) \subseteq \infinite(P) \cup\diverg(P)$.
Thus $P$ has an infinite path $\pi'_2$ with $\ell(\pi'_2)=\ell(\pi_2)$. Consequently, $T|P$ has an infinite path $\pi' \in \pi_1|\pi'_2$
that is a computation with $\textit{reward}(\pi')=r$.

Case 2: Let $\pi_2$ be finite and $\pi_1$ be infinite. Then $\ell(\pi_1)\in\diverg(T)$ and
$\ell(\pi_2) \in \ptr(Q) \subseteq \ptr(P)$. The latter inclusion follows by \hyperlink{star}{(*)}.
Thus $P$ has a finite path $\pi'_2$ with $\ell(\pi'_2)=\ell(\pi_2)$. Consequently, $T|P$ has an infinite path $\pi' \in \pi_1|\pi'_2$ that is a computation with $\textit{reward}(\pi')=r$.

Case 3: Let $\pi_1$ and $\pi_2$ be finite. Let $T'$ and $Q'$ be the last states of $\pi_1$ and
$\pi_2$, respectively. Let $X:=\{a\in Act \mid a^r\in\ini(T')\}$.
Then $\tau\notin X$, $\tau\notin \ini(Q')$ and $\ini(Q')\cap X=\emptyset$.
So $\langle \ell(\pi_2), X\rangle \in \failures(Q) \subseteq \faildd(Q) \subseteq \faildd(P)$.
Thus $P$ has either an infinite path $\pi'_2$ with $\ell(\pi'_2)=\ell(\pi_2)$ or
a finite path $\pi'_2$ with $\ell(\pi'_2)=\ell(\pi_2)$ and whose last state $P'$ satisfies $\ini(P')\cap (X\cup\{\tau\})=\emptyset$.
Consequently, $T|P$ has a finite or infinite path $\pi' \in \pi_1|\pi'_2$ that is a computation with $\textit{reward}(\pi')=r$.
\end{proof}

\section{Weaker notions of reward testing}\label{sec:weaker}
\newcommand{\fpr}{\textrm{\scriptsize fp-reward}}
\newcommand{\spr}{\textrm{\scriptsize sp-reward}}
\newcommand{\nnr}{+\textrm{\scriptsize reward}}
\newcommand{\npr}{-\textrm{\scriptsize reward}}
\newcommand{\fpnpr}{\textrm{\scriptsize fp-}-\textrm{\scriptsize reward}}
\newcommand{\spnpr}{\textrm{\scriptsize sp-}-\textrm{\scriptsize reward}}

\emph{Finite-penalty reward testing} doesn't allow computations that incur infinitely many penalties.
A test $T\in\Tests^R_{\rm CCS}$ has \emph{finite penalties} if each infinite path $T \alpha_1^{r_1} T_1 \alpha_2^{r_2} T_2 \cdots $
has only finitely many transitions $i$ with $r_i<0$.
Let  $P \sqsubseteq_{\rm \fpr} Q$ iff $\Apply(T,P) \Sleq \Apply(T,Q)$ for every finite-penalty reward test $T\!$.

\begin{theorem}{finite-penalty reward characterisation}
Let $P,Q\in\Proc_{\rm CCS}$. Then
$P \sqsubseteq_{\fpr} Q ~~\Leftrightarrow~~
\begin{array}[t]{@{}r@{~\supseteq~}l@{}}
\diverg(P) & \diverg(Q) \wedge \mbox{}\\
\infdd(P) & \infdd(Q) \wedge \mbox{}\\
\faildd(P) & \faildd(Q).
\end{array}$
\end{theorem}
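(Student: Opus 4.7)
The plan is to follow the blueprint of \thm{reward characterisation}, adapting Case~2 of the ``$\Rightarrow$'' direction to respect the finite-penalty restriction and Case~1 of the ``$\Leftarrow$'' direction to exploit the upgrade from $\infinite$ to $\infdd$. For ``$\Rightarrow$'' I would prove the contrapositive: if one of the three $\supseteq$ inclusions fails, exhibit a finite-penalty reward test $T$ with $\Apply(T,P)\not\Sleq\Apply(T,Q)$. Cases~1 and~3, pertaining to $\diverg$ and $\faildd$ respectively, reuse the tests $\bar\sigma.\tau^{-1}.\tau^1$ and $\bar\sigma.\tau^{-1}.\sum_{a\in X}a^1$ from the reward characterisation; each carries only a single negative reward and is thus already finite-penalty.

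The main obstacle is Case~2, $\nu\in\infdd(Q)\setminus\infdd(P)$, since the original witness $\bar\nu^{-1}$ has infinitely many penalties. I would first reduce, without loss of generality, to $\diverg(P)\supseteq\diverg(Q)$ (else Case~1 applies). Under this assumption, any cofinal-divergence witness of $\nu\in\infdd(Q)$ would transfer from $Q$ to $P$, giving $\nu\in\infdd(P)$; so $\nu\in\infinite(Q)$, and the failure $\nu\notin\infdd(P)$ supplies an $n_0$ such that no prefix $\nu_m$ with $m\geq n_0$ lies in $\diverg(P)$. The proposed test is
\[
T \;:=\; \bar{\nu_{n_0}}.\tau^{-1}.\bigl(\bar a_{n_0+1}^{+1}.\bar a_{n_0+2}^{+1}.\cdots \;+\; \tau^{+1}.\mathbf{0}\bigr),
\]
where $\nu_{n_0}$ denotes the length-$n_0$ prefix of $\nu$; it contains a single $\tau^{-1}$ and is finite-penalty. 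On the $Q$ side, the infinite computation of $T|Q$ that syncs all of $\nu$ (possible since $\nu\in\infinite(Q)$) has partial sums $0,\ldots,0,-1,0,1,2,\ldots$ with infimum $-1$, hence reward $-1$. On the $P$ side, since $\nu\notin\infinite(P)$ and $P$ has no divergence along $\nu$ past $n_0$, every computation of $T|P$ is finite and so evaluated by its total sum; the $\tau^{+1}$ recovery branch guarantees that once $T|P$ engages $\tau^{-1}$ it can always pay back at least $+1$, so every outcome in $\Apply(T,P)$ is $\geq 0$. The delicate verification is to check this across all nondeterministic branchings of $P$, including computations in which $P$ diverges at a prefix $\nu_m$ with $m<n_0$, where $T$'s $\tau^{-1}$ is unreachable and the reward is $0$.

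For ``$\Leftarrow$'' I would assume the right-hand side, take a finite-penalty test $T$ and $\pi\in\Comp(T,Q)$ of reward $r$, decompose $\pi=\pi_1|\pi_2$ with $\pi_1$ a $T$-path and $\pi_2$ a $Q$-path, and construct $\pi'\in\Comp(T,P)$ of reward $\leq r$. The two cases with $\pi_2$ finite transfer verbatim from \thm{reward characterisation} via $\ptr(Q)\subseteq\ptr(P)$ and $\faildd(Q)\subseteq\faildd(P)$. The critical new case is $\pi_2$ infinite with $\ell(\pi_2)=\nu\in\infinite(Q)\subseteq\infdd(Q)\subseteq\infdd(P)$. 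If $\nu\in\infinite(P)$, match by a $P$-path of trace $\nu$ as in the previous proof. Otherwise, the cofinal-divergences clause of $\infdd(P)$ supplies, for each $m$, some $\rho_m\in\diverg(P)$ with $|\rho_m|\geq m$ and $\rho_m<\nu$. Because $T$ is finite-penalty, $\pi_1$'s partial sums are bounded below and attain their infimum at some finite position $N_0$; choosing $m$ large enough that $|\rho_m|$ exceeds the number of syncs of $\pi_1$ up to $N_0$, let $\pi'_2$ match $\pi_2$ through $|\rho_m|$ syncs and then let $P$ diverge. The resulting $\pi'$ shares $\pi$'s partial-sum profile up to $N_0$ and is constant thereafter, hence has reward at most $r$.
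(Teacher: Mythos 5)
Your overall strategy---reusing Cases 1 and 3 of the reward characterisation, reducing Case 2 of ``$\Rightarrow$'' to $\nu\in\infinite(Q)$ with a tail of $\nu$ free of $P$-divergences, and handling the infinite-$\pi_2$ case of ``$\Leftarrow$'' via cofinal divergences---is exactly the paper's. However, the test you build in Case 2 of ``$\Rightarrow$'' does not work under the paper's reward semantics. The reward of an \emph{infinite} computation is $\inf_{n\rightarrow\infty}\sum_{i=1}^n r_i$, which must be read as the limit inferior of the partial sums: the paper's discussion of well-behaved tests identifies it with $\lim_{n\rightarrow\infty}$ when that limit exists, and reading it as the plain infimum over all $n$ would invalidate the inequality $\textit{reward}(\pi_1^\dagger)\leq\textit{reward}(\pi_1)$ on which the ``$\Leftarrow$'' direction relies. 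Your test is well-behaved, and the computation of $T|Q$ that synchronises all of $\nu$ has partial sums $0,\dots,0,-1,0,1,2,\dots$ with limit $+\infty$, not $-1$. Every other outcome in $\Apply(T,Q)$ is also $\geq 0$ (the single $\tau^{-1}$ cannot be left unpaid in a maximal computation, and $Q$ has no divergence past $\nu_{n_0}$ since $\diverg(Q)\subseteq\diverg(P)$), while $0\in\Apply(T,P)$; so your test does not refute $\Apply(T,P)\Sleq\Apply(T,Q)$. The repair is to give the post-penalty synchronisations reward $0$ and to make the $\tau^{+1}$ escape available at \emph{every} stage, i.e.\ $T:=\overline{\nu_{n_0}}.\tau^{-1}.\rec[Y_0]{S}$ with $Y_i\defis\tau^1+\bar a_{n_0+i+1}.Y_{i+1}$. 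Then the computation of $T|Q$ that follows $\nu$ forever has partial sums eventually constant at $-1$, whereas on the $P$ side the ever-present $\tau^1$ forces every maximal computation that has incurred the $-1$ to recover to $0$: it cannot halt while an escape is enabled, cannot synchronise forever since $\nu\notin\infinite(P)$, and cannot diverge past $n_0$.

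There is a second, smaller slip in Case 1 of ``$\Leftarrow$''. You anchor the truncation of $\pi_1$ at the position $N_0$ where its partial sums attain their infimum and then extend to cover $|\rho_m|$ synchronisations; the resulting computation is \emph{not} constant after $N_0$ but only after the later position $N_1$ covering those synchronisations, so its reward (the eventual constant value of its partial sums) is the partial sum at $N_1$, which can exceed $r=\liminf_n S_n$ (consider partial sums $0,-1,5,0,0,\dots$, where $r=0$ but the sum at an intermediate position is $5$). Instead, truncate $\pi_1$ at any point beyond \emph{all} of its finitely many negative rewards (this exists because $T$ is finite-penalty); from there the partial sums are non-decreasing with limit $r$, so every later partial sum---in particular the one at the point where enough synchronisations have occurred to reach a divergence trace $\rho\in\diverg(P)$ with $\rho<\nu$---is at most $r$. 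This is precisely the paper's condition that no negative rewards are allocated in the suffix of $\pi_1$ past $\pi_1^\dagger$. With these two repairs your argument coincides with the paper's proof.
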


\begin{proof}
Let $\sqsubseteq_{\it FDI}^d$ be the preorder defined by:
$P\sqsubseteq_{\it FDI}^d Q$ iff the right-hand side of \thm{finite-penalty reward characterisation} holds.
\vspace{1ex}

\noindent
``$\Rightarrow$'':
Suppose $P \not\sqsubseteq_{\it FDI}^d Q$.
Case 1 and 3 proceed exactly as in the proof of \thm{reward characterisation},
but the proof of Case 2 needs to be revised, as its proof uses a test with infinitely many penalties.
So assume $$\diverg(P) \supseteq \diverg(Q) \quad\wedge\quad \faildd(P) \supseteq \faildd(Q)$$ and let
$\nu \in \infdd(Q)\setminus \infdd(P)$.
I can rule out the case $\forall\sigma{<}\nu\; \exists \rho\in\diverg(q).\; \sigma\leq\rho<\nu$
because then $\nu \in \infdd(P)$, using that $\diverg(Q)\subseteq\diverg(P)$.
So $\nu \in \infinite(Q)$. Let $\nu:=\nu_1\nu_2$, where each $\rho\in\diverg(Q)$ with $\rho<\nu$
satisfies $\rho<\nu_1$. Let $\nu_2=b_1 b_2 \cdots \in\A^\infty$.\vspace{2pt}
Take $T:=\bar\nu_1.\tau^{-1}.\rec[Y_0]{S}$,
where \plat{$S=\{Y_i \defis \tau^1 + \bar b_{i+1}. Y_{i+1} \mid i\geq 0\}$}.
Then $T|Q$ has a computation $\pi$ with $\textit{reward}(\pi)<0$,
whereas $T|P$ has no such computation. Hence $P \not\sqsubseteq_{\fpr} Q$.
\vspace{1ex}

\noindent
``$\Leftarrow$'': 
Suppose $P \sqsubseteq_{\it FDI}^d Q$. The proof proceeds just as the one of \thm{reward characterisation},
except for Case 1.

Case 1: Let $\pi_2$ be infinite. Then $\ell(\pi_2) \in \infinite(Q) \cup\diverg(Q) \subseteq \infdd(P) \cup\diverg(P)$.
In case $\ell(\pi_2) \in \infinite(P) \cup\diverg(P)$ the proof concludes as for \thm{reward characterisation}.
So assume that $\ell(\pi_2)\in\A^\infty$ and $\forall\sigma{<}\ell(\pi_2)\; \exists \rho\in\diverg(P).\; \sigma\leq\rho<\ell(\pi_2)$.
Then there are prefixes $\pi^\dagger$,  $\pi_1^\dagger$ and $\pi_2^\dagger$ of $\pi$, $\pi_1$ and
$\pi_2$ such that (i) \plat{$\pi^\dagger\in\pi_1^\dagger|\pi_2^\dagger$}, (ii) there are no negative rewards
allocated in the suffix of $\pi_1$ past \plat{$\pi_1^\dagger$}, and (iii) \plat{$\ell(\pi_2^\dagger) \in\diverg(P)$}.
Let $\pi_2'$ be an infinite path of $P$ with $\ell(\pi_2')=\ell(\pi_2^\dagger)$.
Then there is a computation $\pi'\in\pi_1^\dagger|\pi_2'$ of $T|P$ with $\textit{reward}(\pi') =
\textit{reward}(\pi_1^\dagger) \leq \textit{reward}(\pi_1) =r$.
\end{proof}

\emph{Single penalty reward testing} doesn't allow computations that incur multiple penalties.
A test $T\in\Tests^R_{\rm CCS}$ has the \emph{single penalty} property if each path $T \alpha_1^{r_1} T_1 \alpha_2^{r_2} T_2 \cdots $
has at most one transition $i$ with $r_i<0$.
Let  $P \sqsubseteq_{\spr} Q$ iff $\Apply(T,P) \Sleq \Apply(T,Q)$ for every single penalty reward
test $T\!$.
Obviously, $\sqsubseteq_{\spr}$ coincides with $\sqsubseteq_{\fpr}$.
This follows because all test used in the proof of \thm{finite-penalty reward characterisation} have
the single penalty property.

Analogously one might weaken reward testing and/or single penalty reward testing by requiring that in
each computation only finitely many, or at most one, positive reward can be reaped. This does not constitute a real weakening,
as the tests used in Theorems~\ref{thm:reward characterisation} and~\ref{thm:finite-penalty reward characterisation}
already allot at most a single positive reward per computation only.

\emph{Nonnegative reward testing} requires all rewards to be nonnegative.
Let $P \sqsubseteq_{\nnr} Q$ iff $\Apply(T,P) \linebreak[1] \Sleq \Apply(T,Q)$ for every nonnegative reward test $T\!$.
Likewise $\sqsubseteq_{\npr}$ requires all rewards to be $0$ or negative.

\begin{theorem}{nonnegative reward characterisation}
Let $P,Q\in\Proc_{\rm CCS}$. Then
$P \sqsubseteq_{\nnr} Q ~~\Leftrightarrow~~
\begin{array}[t]{@{}r@{~\supseteq~}l@{}}
\divd(P) & \divd(Q) \wedge \mbox{}\\
\infd(P) & \infd(Q) \wedge \mbox{}\\
\faild(P) & \faild(Q).
\end{array}$
\end{theorem}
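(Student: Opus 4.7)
The plan is to follow the templates of Theorems~\ref{thm:reward characterisation} and \ref{thm:finite-penalty reward characterisation}, writing $\sqsubseteq^\bot$ for the preorder defined by the right-hand side, and proving the two directions separately. The main obstacle is the $\Rightarrow$ direction: the penalty $\tau^{-1}$ that did all the distinguishing work in the earlier proofs is no longer available, so it must be replaced by a family of positive ``escape rewards'' $\tau^1$ inserted at every prefix position along the trace under scrutiny, and one must verify that the resulting tests still detect the RHS failure through the Smyth comparison.

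For $\Rightarrow$, assume $P \not\sqsubseteq^\bot Q$. In all three cases I would use a test comb of shape $T_k \defis \tau^1.0 + \bar{a_{k+1}}.T_{k+1}$, with appropriate terminus: $T_n \defis \tau^1.0$ for Case~1 ($\sigma \in \divd(Q) \setminus \divd(P)$, WLOG $\sigma \in \diverg(Q)$ by taking a shortest prefix); the infinite-recursion analogue $\rec[X_0]{\{X_k \defis \tau^1.0 + \bar{b_{k+1}}.X_{k+1} \mid k \geq 0\}}$ for Case~2 ($\nu \in \infd(Q) \setminus \infd(P)$); and the failure-probing $T_n \defis \sum_{a\in X}\bar{a}.\tau^1.0$ for Case~3 ($\langle\sigma,X\rangle \in \faild(Q) \setminus \faild(P)$), the divergence-extended sub-part of Case~3 reducing to Case~1. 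In each case the two things to verify are (a) that $T\mid Q$ admits a reward-$0$ maximal $\tau$-computation, obtained by syncing $\bar\sigma\mid\sigma$ up to $Q$'s divergent or failing residual and then either looping inside $Q$'s divergence or deadlocking, in either event never firing a $\tau^1$; and (b) that every computation of $T\mid P$ has reward $\geq 1$, because $\sigma \notin \divd(P)$ precludes $P$ from looping silently, so every maximal $\tau$-computation of $T\mid P$ eventually reaches a state where the only surviving $\tau$-move of $T$ is its $\tau^1$, which must then fire.

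For $\Leftarrow$, given $\pi \in \Comp^R(T\mid Q)$ of reward $r$, I decompose it into a $T$-path $\pi_1$ and a $Q$-path $\pi_2$ with $\textit{reward}(\pi) = \textit{reward}(\pi_1)$, and proceed by the same three cases as in \thm{reward characterisation}. The novel ingredient is a divergence shortcut: whenever $\infd(P) \supseteq \infd(Q)$ or $\faild(P) \supseteq \faild(Q)$ only supplies a witness with strictly shorter observable content than $\ell(\pi_2)$, say $\ell(\pi_2) = \sigma\rho$ with $\sigma \in \diverg(P)$, I truncate $\pi_1$ at the point after which $\bar\sigma$ has been fully matched into a prefix $\pi_1^\dagger$, pair it with a $P$-path $\pi_2'$ that reaches $P$'s divergent state along $\sigma$ and diverges there, and observe that the resulting infinite computation $\pi' \in \pi_1^\dagger \mid \pi_2'$ of $T\mid P$ satisfies $\textit{reward}(\pi') = \textit{reward}(\pi_1^\dagger) \leq \textit{reward}(\pi_1) = r$, the inequality crucially relying on the nonnegativity of every $T$-reward. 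For the sub-case where $\pi_1$ is infinite and $\pi_2$ finite, I use the analogue $\ptr(R) \subseteq \{\sigma \mid \langle\sigma,\emptyset\rangle \in \faild(R)\}$ of \hyperlink{star}{(*)} to dispatch $\ell(\pi_2) \in \ptr(Q)$ either to $\ptr(P)$ or again to the divergence shortcut.

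I expect the main difficulty to lie in establishing (b) above: one must rule out that $T\mid P$ silently mimics the reward-$0$ escape available to $T\mid Q$, which reduces to the fact that without any divergence of $P$ along a prefix of $\sigma$, $P$'s own $\tau$'s cannot sustain an infinite silent loop, leaving $T$'s $\tau^1$ option as the unique maximal continuation at every stall point. Once that is settled, the $\Leftarrow$ direction is a mechanical adaptation of the earlier proof, with nonnegativity guaranteeing that shortening a $T$-path can only decrease the accumulated reward.
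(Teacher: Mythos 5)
Your overall plan is the paper's: the same three test combs for ``$\Rightarrow$'' and the same decomposition-plus-divergence-shortcut for ``$\Leftarrow$''. The ``$\Leftarrow$'' direction is handled correctly (truncating $\pi_1$ at the point where a divergence trace of $P$ has been matched and using nonnegativity of the test's rewards to get $\textit{reward}(\pi_1^\dagger)\leq\textit{reward}(\pi_1)=r$ is exactly the intended argument, as is the $\faild$-analogue of \hyperlink{star}{(*)}). Cases 1 and 2 of ``$\Rightarrow$'' are also sound: there an infinite computation of $T|P$ that forever avoids the escape reward $\tau^1$ would force a prefix of $\sigma$ (resp.\ $\nu$) into $\diverg(P)$, contradicting the hypothesis, and a finite one cannot be maximal because some $\tau^1$ is always enabled.

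The gap is in your Case 3 terminus $T_n\defis\sum_{a\in X}\bar a.\tau^1.{\bf 0}$. After the synchronisation on some $a\in X$ the composite is in a state $\tau^1.{\bf 0}\,|\,P''$, where $P''$ is a derivative of $P$ after $\sigma a$ --- and nothing in the hypothesis $\langle\sigma,X\rangle\notin\faild(P)$ prevents $\sigma a\in\divd(P)$. Maximal computations are not fair: if $P''$ diverges, the computation can consist of $P''$'s own $\tau$-steps forever and never schedule the test's $\tau^1$, yielding reward $0$. Concretely, take $P=a.\Delta{\bf 0}$ and $Q={\bf 0}$, so that $\langle\epsilon,\{a\}\rangle\in\faild(Q)\setminus\faild(P)$; your test is $T=\bar a.\tau^1.{\bf 0}$, and one computes $\Apply(T,P)=\{0,1\}$ and $\Apply(T,Q)=\{0\}$, whence $\Apply(T,P)\Sleq\Apply(T,Q)$ and the test fails to witness $P\not\sqsubseteq_{\nnr}Q$. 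This is precisely where your justification (b) --- ``the only surviving $\tau$-move of $T$ is its $\tau^1$, which must then fire'' --- breaks down: the clause $\sigma\notin\divd(P)$ only rules out silent looping of $P$ \emph{along prefixes of $\sigma$}, not after the extra action $a$. The repair is the construction the paper uses: attach the reward to the synchronising action itself, $T_n\defis\sum_{a\in X}\bar a^1$, so that the reward is reaped at the instant of synchronisation and cannot be starved by a subsequent divergence of $P$. With that change the rest of your argument goes through.
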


\begin{proof}
Let $\sqsubseteq_{\it FDI}^\bot$ be the preorder defined by:
$P\sqsubseteq_{\it FDI}^\bot Q$ iff the right-hand side of \thm{nonnegative reward characterisation} holds.
\vspace{1ex}

\noindent
``$\Rightarrow$'':
Suppose $P \not\sqsubseteq_{\it FDI}^\bot Q$.

Case 1: Let $\sigma =a_1 a_2 \cdots a_n \in \divd(Q)\setminus \divd(P)$.
Take $T:=\rec[X_0]{S}$ in which\vspace{-1ex}
$$S = \{X_i \defis \tau^1 + a_{i+1}.X_{i+1} \mid 0\mathop{\leq} i \mathop{<}  n\}\cup \{X_n \defis \tau^1\}.\vspace{-1ex}$$
Then $T|Q$ has a computation $\pi$ with $\textit{reward}(\pi)<1$,
which $T|P$ has not. Hence $P \not\sqsubseteq_{\nnr} Q$.

Case 2: Let $\nu =a_1 a_2 \cdots \in \infd(Q)\setminus \infd(P)$.
Let $T:=\rec[X_0]{S}$ with \plat{$S = \{X_{i-1} \defis \tau^1 \!+ a_{i}.X_{i} \mid i\mathop{\geq} 1\}$}.
Then $T|Q$ has a computation $\pi$ with $\textit{reward}(\pi)<1$,
which $T|P$ has not. Hence $P \not\sqsubseteq_{\nnr} Q$.

Case 3: Let $\langle a_1 a_2 \cdots a_n,X\rangle \in \faild(Q)\setminus \faild(P)$.
Take $T:=\rec[X_0]{S}$ in which\vspace{-1ex}
$$S = \{X_i \defis \tau^1 + a_{i+1}.X_{i+1} \mid  0\mathop{\leq} i \mathop{<}  n\}\cup \{X_n \defis \sum_{a\in X} a^1\}.\vspace{-1ex}$$
Then $T|Q$ has a computation $\pi$ with $\textit{reward}(\pi)<1$,
which $T|P$ has not. Hence $P \not\sqsubseteq_{\nnr} Q$.
\vspace{1ex}

\noindent
``$\Leftarrow$'': 
Suppose $P \sqsubseteq_{\it FDI}^\bot Q$.
Let $T\in\Tests^R_{\rm CCS}$ be a nonnegative rewards test and $r\in\IR$ be such that
there is a $\pi\in\Comp(T|Q)$ with $\textit{reward}(\pi)=r$.
It suffices to find a $\pi'\in\Comp(T|P)$ with $\textit{reward}(\pi')\leq r$.
The computation $\pi$ can be seen as a path of $T|Q$ in which all actions are $\tau$.
Decompose this path into paths $\pi_1$ of $T$ and $\pi_2$ of $Q$.
Note that $\textit{reward}(\pi)=\textit{reward}(\pi_1)$.

Case 1: Let $\pi_2$ be infinite. Then $\ell(\pi_2) \in \infinite(Q) \cup\diverg(Q) \subseteq \infd(P) \cup\divd(P)$.
If $\ell(\pi_2) \in \infinite(P) \cup\diverg(P)$ then $P$ has an infinite path $\pi'_2$ with
$\ell(\pi'_2)=\ell(\pi_2)$. Consequently, $T|P$ has an infinite path $\pi' \in \pi_1|\pi'_2$ that is
a computation with $\textit{reward}(\pi')=r$.
The alternative is that $\ell(\pi_2)$ has a prefix in $\diverg(P)$.
In that case there are prefixes $\pi^\dagger$,  $\pi_1^\dagger$ and \plat{$\pi_2^\dagger$} of $\pi$, $\pi_1$ and
$\pi_2$ such that \plat{$\pi^\dagger\in\pi_1^\dagger|\pi_2^\dagger$} and \plat{$\ell(\pi_2^\dagger) \in\diverg(P)$}.
Let $\pi_2'$ be an infinite path of $P$ with $\ell(\pi_2')=\ell(\pi_2^\dagger)$.
Then there is a computation $\pi'\in\pi_1^\dagger|\pi_2'$ of $T|P$ with
$\textit{reward}(\pi') = \textit{reward}(\pi_1^\dagger) \leq \textit{reward}(\pi_1) =r$.

Case 2: Let $\pi_2$ be finite and $\pi_1$ be infinite. Then $\ell(\pi_1)\in\diverg(T)$ and
$\ell(\pi_2) \in \ptr(Q) \subseteq \ptr(P)\cup\divd(P)$. The latter inclusion follows since\vspace{-1ex}
$$\ptr(R)\cup\divd(R)=\{\sigma\mid \langle\sigma,\emptyset\rangle \in \faild(R)\}\vspace{-1ex}$$ for any $R\in \Proc$.
If $\ell(\pi_2) \in \ptr(P)$ then
$P$ has a finite path $\pi'_2$ with $\ell(\pi'_2)=\ell(\pi_2)$. Consequently, $T|P$ has an
infinite path $\pi' \in \pi_1|\pi'_2$ that is a computation with $\textit{reward}(\pi')=r$.
The alternative is handled just as for Case 1 above.

Case 3: Let $\pi_1$ and $\pi_2$ be finite. Let $T'$ and $Q'$ be the last states of $\pi_1$ and
$\pi_2$, respectively. Let $X:=\{a\in Act \mid a^r\in\ini(T')\}$.
Then $\tau\notin X$, $\tau\notin \ini(Q')$ and $\ini(Q')\cap X=\emptyset$.
So $\langle \ell(\pi_2), X\rangle \in \failures(Q) \subseteq \faild(Q) \subseteq \faild(P)$.
If $\langle \ell(\pi_2)\in\failures(P)$ then $P$ has a finite path $\pi'_2$ with
$\ell(\pi'_2)=\ell(\pi_2)$ and whose last state $P'$ satisfies $\ini(P')\cap (X\cup\{\tau\})=\emptyset$. 
Consequently, $T|P$ has a finite or infinite path $\pi' \in \pi_1|\pi'_2$ that is a computation with $\textit{reward}(\pi')=r$.
The alternative is handled just as for Case 1 above.
\end{proof}
One might weaken nonnegative reward testing by requiring that in
each computation only finitely many, or at most one, reward can be reaped. This does not constitute a real weakening,
as the tests used in \thm{nonnegative reward characterisation} already allot at most a single reward per computation only.

\begin{theorem}{nonpositive reward characterisation}
Let $P,Q\in\Proc_{\rm CCS}$. Then
$P \sqsubseteq_{\npr} Q ~~\Leftrightarrow~~
\begin{array}[t]{@{}r@{~\supseteq~}l@{}}
\ptr(P) & \ptr(Q) \wedge \mbox{}\\
\infinite(P) & \infinite(Q) 
\end{array}$
\end{theorem}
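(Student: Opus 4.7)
Let me define $\sqsubseteq_{\it PI}$ by $P \sqsubseteq_{\it PI} Q$ iff the right-hand side of the theorem holds. The plan is to follow the two-direction template used in Theorems~\ref{thm:reward characterisation}--\ref{thm:nonnegative reward characterisation}, adapting both the distinguishing tests and the case split to the nonpositive setting, in which only penalties are available so that \emph{failure} and \emph{divergence} information becomes impossible to detect.

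For the direction ($\Rightarrow$) I argue by contrapositive. If $\sigma = a_1 \cdots a_n \in \ptr(Q) \setminus \ptr(P)$, take the nonpositive reward test $T := \bar\sigma.\tau^{-1}$. Then $T|Q$ admits a computation of reward $-1$, whereas in $T|P$ the penalty $\tau^{-1}$ is never enabled, so every computation of $T|P$ has reward $0$; no outcome of $\Apply(T,P)$ lies below the witness $-1 \in \Apply(T,Q)$. If instead $\nu \in \infinite(Q) \setminus \infinite(P)$, take $T := \bar\nu^{-1}$. Then $T|Q$ admits an infinite computation of reward $-\infty$. Any infinite computation of $T|P$ can synchronise only finitely often with $P$ (otherwise $P$ would witness $\nu \in \infinite(P)$); the remaining steps are $P$'s own $\tau$ transitions, which contribute $0$. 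Hence every outcome of $\Apply(T,P)$ is a finite real, and no outcome is $\leq -\infty$. Either way $P \not\sqsubseteq_{\npr} Q$.

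For ($\Leftarrow$), fix a nonpositive reward test $T$ and a computation $\pi \in \Comp(T|Q)$ with reward $r$. Decompose $\pi$ into paths $\pi_1$ of $T$ and $\pi_2$ of $Q$, so that $\textit{reward}(\pi) = \textit{reward}(\pi_1) = r$. If $\ell(\pi_2)$ is infinite then $\ell(\pi_2) \in \infinite(Q) \subseteq \infinite(P)$, so $P$ has an infinite path $\pi_2'$ with $\ell(\pi_2') = \ell(\pi_2)$, and some $\pi' \in \pi_1 | \pi_2'$ is an infinite computation of $T|P$ of reward $r$. Otherwise $\ell(\pi_2)$ is finite; then $\ell(\pi_2) \in \ptr(Q) \subseteq \ptr(P)$ provides a finite path $\pi_2'$ of $P$ with $\ell(\pi_2') = \ell(\pi_2)$. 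The composite $\pi' \in \pi_1 | \pi_2'$ inherits the reward of $\pi_1$; if it is not already maximal, extend it to a computation of $T|P$ by appending further $\tau$ steps, each of which contributes a nonpositive reward. Thus the final $\pi'$ has reward $\leq r$.

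The main obstacle I expect in the $(\Leftarrow)$ proof is the sub-case concealed inside ``$\ell(\pi_2)$ finite'' where $\pi_2$ itself is infinite, i.e.\ $Q$ diverges internally along $\pi$, so $\ell(\pi_2) \in \diverg(Q)$. The hypothesis does not require $\diverg(P) \supseteq \diverg(Q)$, so $P$ need not reproduce that divergence. The resolution is precisely the nonpositivity of $T$: matching $\ell(\pi_2)$ by a finite path in $\ptr(P)$ and padding the remaining positions by $\tau$ steps of $T|P$ can only lower the reward. The same observation explains why failures do not appear on the right-hand side of the theorem: in the absence of positive rewards, refusing an action contributes the same $0$ reward as engaging with it, so failures cannot be used as a distinguishing feature.
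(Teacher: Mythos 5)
Your proposal is correct and takes essentially the same route as the paper: the same distinguishing tests $\bar\sigma.\tau^{-1}$ and $\bar\nu^{-1}$ in the forward direction, and the same decompose--match--extend argument in the converse, where nonpositivity of the rewards ensures that completing the matched path to a full computation of $T|P$ can only lower the reward. The divergence sub-case you single out is exactly what the paper covers with its remark that $\pi'$ is ``either a computation, or a prefix of a computation''.
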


\begin{proof}
Let $\sqsubseteq_{\it T}^\infty$ be the preorder defined by:
$P\sqsubseteq_{\it T}^\infty Q$ iff the right-hand side of \thm{nonpositive reward characterisation} holds.
\vspace{1ex}

\noindent
``$\Rightarrow$'':
Suppose $P \not\sqsubseteq_{\it T}^\infty Q$.

Case 1: Let $\sigma  \in \ptr(Q)\setminus \ptr(P)$.
Take $T:=\bar\sigma.\tau^{-1}$.
Then $T|Q$ has a computation $\pi$ with $\textit{reward}(\pi)<1$,
which $T|P$ has not. Hence $P \not\sqsubseteq_{\npr} Q$.

Case 2 proceeds exactly as in the proof of \thm{reward characterisation}.
\vspace{1ex}

\noindent
``$\Leftarrow$'': 
Suppose $P \sqsubseteq_{\it T}^\infty Q$.
Let $T\in\Tests^R_{\rm CCS}$ be a nonpositive rewards test and $r\in\IR$ be such that
there is a $\pi\in\Comp(T|Q)$ with $\textit{reward}(\pi)=r$.
It suffices to find a $\pi'\in\Comp(T|P)$ with $\textit{reward}(\pi')\leq r$.
The computation $\pi$ can be seen as a path of $T|Q$ in which all actions are $\tau$.
Decompose this path into paths $\pi_1$ of $T$ and $\pi_2$ of $Q$.
Note that $\textit{reward}(\pi)=\textit{reward}(\pi_1)$.

Moreover, $\ell(\pi_2) \in \infinite(Q) \cup\ptr(Q) \subseteq \infinite(P) \cup\ptr(P)$.
So $P$ has a path $\pi'_2$ with $\ell(\pi'_2)=\ell(\pi_2)$. Consequently, $T|P$ has an path
$\pi' \in \pi_1|\pi'_2$ that is either a computation, or a prefix of a computation, with $\textit{reward}(\pi')=r$.
In case it is a prefix of a computation $\pi''$ then $\textit{reward}(\pi'')\leq \textit{reward}(\pi')=r$.
\end{proof}
\emph{Finite-penalty nonpositive reward testing} only allows computations that incur no positive
rewards and merely finitely many penalties.  Let $P \sqsubseteq_{\rm \fpnpr} Q$ iff $\Apply(T,P)
\Sleq \Apply(T,Q)$ for every finite-penalty nonpositive reward test $T\!$.

\begin{theorem}{fp nonpositive reward characterisation}
Let $P,Q\in\Proc_{\rm CCS}$. Then
$P \sqsubseteq_{\fpnpr} Q ~~\Leftrightarrow~~
\begin{array}[t]{@{}r@{~\supseteq~}l@{}}
\ptr(P) & \ptr(Q)
\end{array}$
\end{theorem}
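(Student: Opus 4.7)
My plan is to use exactly the template of \thm{nonpositive reward characterisation} and \thm{finite-penalty reward characterisation}, but with the much simpler right-hand side of plain partial-trace inclusion.

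For ($\Rightarrow$), given $\sigma\in\ptr(Q)\setminus\ptr(P)$, I would reuse the one-penalty test $T:=\bar\sigma.\tau^{-1}$ from Case~1 of \thm{nonpositive reward characterisation}. It is nonpositive and, having only a single negatively rewarded transition, trivially finite-penalty. Then $T|Q$ admits a computation of reward~$-1$, while every computation of $T|P$ must get stuck with $T$ strictly before reaching its $\tau^{-1}$ step (because $P$ cannot complete $\sigma$), so $\Apply(T,P)\subseteq\{0\}$ contains nothing $\leq -1$, yielding $\Apply(T,P)\not\Sleq\Apply(T,Q)$.

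For ($\Leftarrow$), suppose $\ptr(P)\supseteq\ptr(Q)$ and let $T$ be a finite-penalty nonpositive reward test with $\pi\in\Comp(T|Q)$ of reward~$r$. I would decompose $\pi$ into paths $\pi_1$ of $T$ and $\pi_2$ of $Q$, so that $\textit{reward}(\pi)=\textit{reward}(\pi_1)$, and then apply the key truncation step that jointly exploits both assumptions on $T$: pick a finite prefix $\pi_1^\dagger$ of $\pi_1$ containing every strictly negatively rewarded transition (finite-penalty), noting that all rewards of $\pi_1$ past $\pi_1^\dagger$ are forced to be~$0$ (nonpositive), so $\textit{reward}(\pi_1^\dagger)=r$. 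The correspondingly projected prefix $\pi_2^\dagger$ of $\pi_2$ is finite, with $\ell(\pi_2^\dagger)\in\ptr(Q)\subseteq\ptr(P)$, so $P$ carries a finite path $\pi_2'$ with $\ell(\pi_2')=\ell(\pi_2^\dagger)$.

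Composing yields a finite $\tau$-path $\pi'\in\pi_1^\dagger|\pi_2'$ of $T|P$ of reward exactly~$r$, which I would extend to a maximal $\tau$-path $\pi''\in\Comp(T|P)$. Since every transition beyond $\pi'$ contributes reward $\leq 0$, the partial sums past $\pi'$ form a nonincreasing sequence, so $\textit{reward}(\pi'')\leq r$ whether $\pi''$ is finite or infinite (the latter via the $\liminf$ of a nonincreasing sequence). The main obstacle---and what both hypotheses on $T$ are needed for---is precisely this truncation: without finite-penalty one cannot capture all penalties in a finite prefix, and without nonpositivity the extension of $\pi'$ could raise the reward above~$r$, so it is exactly their conjunction that makes plain partial-trace containment suffice.
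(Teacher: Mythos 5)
Your proposal is correct and follows essentially the same route as the paper: the same single-penalty test $\bar\sigma.\tau^{-1}$ for the forward direction, and for the converse the same truncation to a finite prefix capturing all penalties (the paper truncates $\pi$ first and then decomposes, you truncate $\pi_1$ and then project---an immaterial difference), followed by the partial-trace transfer, recomposition, and the observation that any nonpositive extension of the resulting prefix can only lower the reward. You are in fact slightly more explicit than the paper about why $\Apply(T,P)=\{0\}$ in the forward direction and about extending $\pi'$ to a maximal computation at the end.
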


\begin{proof}
Let $\sqsubseteq_{\it T}$ be the preorder defined by:
$P\sqsubseteq_{\it T} Q$ iff the right-hand side of \thm{fp nonpositive reward characterisation} holds.
\vspace{1ex}

\noindent
``$\Rightarrow$'':
Suppose $P \not\sqsubseteq_{\it T} Q$.
Let $\sigma  \in \ptr(Q)\setminus \ptr(P)$.
Take $T:=\bar\sigma.\tau^{-1}$.
Then $T|Q$ has a computation $\pi$ with $\textit{reward}(\pi)<1$,
which $T|P$ has not. Hence $P \not\sqsubseteq_{\npr} Q$.
\vspace{1ex}

\noindent
``$\Leftarrow$'': 
Suppose $P \sqsubseteq_{\it T} Q$.
Let $T\in\Tests^R_{\rm CCS}$ be a finite-penalty nonpositive rewards test and $r\in\IR$ be such that
there is a $\pi\in\Comp(T|Q)$ with $\textit{reward}(\pi)=r$.
Then $\pi$ has a finite prefix $\pi^\dagger$ (not necessarily a computation) with $\textit{reward}(\pi)=r$.
It suffices to find a prefix $\pi'$ of a computation of $T|P$ with $\textit{reward}(\pi')= r$.
The finite prefix $\pi^\dagger$ can be seen as a path of $T|Q$ in which all actions are $\tau$.
Decompose this path into finite paths $\pi_1$ of $T$ and $\pi_2$ of $Q$.
Now $\ell(\pi_2) \in\ptr(Q) \subseteq \ptr(P)$.
So $P$ has a path $\pi'_2$ with $\ell(\pi'_2)=\ell(\pi_2)$. Consequently, $T|P$ has a path
$\pi' \in \pi_1|\pi'_2$ that is a prefix of a computation, with $\textit{reward}(\pi')=r$.
\end{proof}
\emph{Single penalty nonpositive reward testing} only allows computations that incur no positive
rewards and at most one penalty. Let $P \sqsubseteq_{\rm \spnpr} Q$ iff $\Apply(T,P) \Sleq \Apply(T,Q)$
for every single penalty nonpositive reward test $T\!$.
Obviously, $\sqsubseteq_{\spnpr}$ coincides with $\sqsubseteq_{\fpnpr}$.
This follows because all test used in the proof of \thm{fp nonpositive reward characterisation} have
the single penalty property.

\section{Reward may testing}

Call a test $T\in\Tests^R_{\rm CCS}$ \emph{well-behaved} if for each infinite path $T \alpha_1^{r_1} T_1 \alpha_2^{r_2} T_2 \cdots $
the limit $\lim_{n\rightarrow\infty}\sum_{i=1}^n r_i \in \IR\cup\{-\infty,\infty\}$ exists. If the
sequence $(r_i)_{i=1}^\infty$ alternates between $1$ and $-1$ for instance, the test is not well-behaved.
Since all tests used in the proof of \thm{reward characterisation} are well-behaved, the reward
testing preorder $\sqsubseteq_{\rm reward}$ would not change if one restricts the collection of
available test to the well-behaved ones only.
When restricting to well-behaved tests, the infimum $\inf_{n\rightarrow\infty}$ in
\df{reward computation} may be read as $\lim_{n\rightarrow\infty}$.

\begin{theorem}{inverse}
$P \sqsubseteq^{\rm may}_{\rm reward} Q$ iff $Q \sqsubseteq^{\rm must}_{\rm reward} P$.
\end{theorem}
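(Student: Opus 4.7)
The plan is to exploit a natural duality between may and must testing via negation of rewards. First, I would observe an elementary order-theoretic fact: for any nonempty $O_1, O_2 \subseteq \IR \cup \{-\infty,\infty\}$, writing $-O := \{-r \mid r \in O\}$, one has $-O_1 \Hleq -O_2$ if and only if $O_2 \Sleq O_1$; this is immediate by substituting $r = -s$ and flipping the inequality.

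Second, for a reward test $T$ I would define $T^-$ to be the test obtained by negating every reward annotation in $T$. The map $T \mapsto T^-$ is a self-inverse bijection on $\Tests^R_{\rm CCS}$ that restricts to a bijection on the well-behaved tests. For a finite computation of $T|R$, the corresponding computation of $T^-|R$ has the negated reward, since reward of a finite computation is a finite sum. For an infinite computation, however, the reward of $T|R$ is $\inf_n \sum_{i=1}^n r_i$ whereas the corresponding computation of $T^-|R$ has reward $\inf_n \sum_{i=1}^n (-r_i) = -\sup_n \sum_{i=1}^n r_i$; these agree precisely when the partial sums converge, i.e., when $T$ is well-behaved. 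Consequently, when $T$ is well-behaved, $\Apply(T^-, R) = -\Apply(T, R)$ for every process $R$.

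Third, I would invoke the remark made just before the theorem: restricting attention to well-behaved tests does not change $\sqsubseteq^{\rm must}_{\rm reward}$, and by the same reasoning (the characterising tests on the may side can be chosen well-behaved too, using a symmetric construction) the same holds for $\sqsubseteq^{\rm may}_{\rm reward}$. Quantifying hereafter only over well-behaved tests, the chain of equivalences reads
$$P \sqsubseteq^{\rm may}_{\rm reward} Q \;\Leftrightarrow\; \forall T\,.\,\Apply(T,P) \Hleq \Apply(T,Q) \;\Leftrightarrow\; \forall T\,.\,\Apply(T^-,P) \Hleq \Apply(T^-,Q),$$
where the last step uses the involution $T \leftrightarrow T^-$. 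Substituting $\Apply(T^-,R) = -\Apply(T,R)$ and applying the order-theoretic duality from the first step gives $\forall T\,.\,\Apply(T,Q) \Sleq \Apply(T,P)$, which is exactly $Q \sqsubseteq^{\rm must}_{\rm reward} P$.

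The main obstacle is the infinite-computation subtlety: the outcome of an infinite run is defined via $\inf$ over partial sums, and $\inf$ does not commute with negation. The only nontrivial technical step is therefore the reduction to well-behaved tests, for which the partial sums converge to a genuine limit and negation then behaves as expected; once this is in place, everything else is a straightforward chain of bijections and order-theoretic equivalences.
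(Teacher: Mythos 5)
Your proof is correct and follows essentially the same route as the paper's: negate all reward annotations ($T \mapsto -T$), observe $\Apply(-T,R) = -\Apply(T,R)$, and convert $\Hleq$ into the reversed $\Sleq$. You are in fact more careful than the paper's two-line argument, since you explicitly isolate the $\inf$-versus-$\lim$ issue for infinite computations and the resulting need to first reduce both preorders to well-behaved tests.
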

\begin{proof}
For any well-behaved test $T$, let $-T$ be obtained by changing all occurrences of actions $(\alpha,r)$ into $(\alpha,\!-r)$.
Now $\Apply(-T\!,P)= \{-r \mathbin| r\mathbin\in \Apply(T\!,P)\}$. This immediately yields the claimed result.%
\end{proof}
All weaker notions of testing contemplated in \Sec{weaker} employ well-behaved tests only.
The same reasoning as above yields
(besides ${\sqsubseteq^{\rm may}_{\rm reward}} = {\sqsubseteq_{\rm reward}^{-1}}$)\\[1ex]
\mbox{}\hfill
${\sqsubseteq^{\rm may}_{\fpr}} = {\sqsubseteq_{\rm reward}^{-1}}$
\hfill,
\hfill
${\sqsubseteq^{\rm may}_{\nnr}} = {\sqsubseteq_{\npr}^{-1}}$
\hfill,
\hfill
${\sqsubseteq^{\rm may}_{\npr}} = {\sqsubseteq_{\nnr}^{-1}}$
\hfill
and
\hfill
${\sqsubseteq^{\rm may}_{\fpnpr}} = {\sqsubseteq_{\npr}^{-1}}$
\hfill.

\section{A hierarchy of testing preorders}\label{sec:hierarchy}

\begin{theorem}{must}
$P \Mustleq Q$ iff $P \sqsubseteq_{\nnr} Q$. Likewise, $P \sqsubseteq^{\rm dual}_{\rm must} Q$ iff $P \sqsubseteq_{\fpnpr} Q$.
\end{theorem}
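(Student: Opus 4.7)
The plan is to match the characterizations of \thm{nonnegative reward characterisation} and \thm{fp nonpositive reward characterisation} to classical characterizations of $\Mustleq$ and $\Mayleq$. For the first equivalence, by \thm{nonnegative reward characterisation} the preorder $\sqsubseteq_{\nnr}$ is characterized by the three inclusions $\divd(P)\supseteq\divd(Q)$, $\infd(P)\supseteq\infd(Q)$, and $\faild(P)\supseteq\faild(Q)$; it therefore suffices to show $\Mustleq$ is characterized by the same three inclusions.

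To do this I adapt the proof of \thm{nonnegative reward characterisation} to classical tests. The ``$\Leftarrow$'' direction takes a classical test $T$ and an unsuccessful computation $\pi$ of $T|Q$, decomposes $\pi\in\pi_1|\pi_2$ into paths of $T$ and $Q$, and---case-splitting on the finiteness of $\pi_1$ and $\pi_2$ as in \thm{nonnegative reward characterisation}---constructs a decomposition $\pi'\in\pi_1|\pi_2'$ for $T|P$, using the three inclusions to supply the needed $P$-path. Unsuccessfulness transfers because $\pi'$ reuses the $T$-trajectory of $\pi$ (or a prefix), no $T$-state of $\pi$ enables $\omega$, and processes carry no $\omega$-transitions. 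The ``$\Rightarrow$'' direction constructs, for each violated inclusion, a classical test witnessing $P\not\Mustleq Q$; the constructions parallel those of \thm{nonnegative reward characterisation}'s ``$\Rightarrow$'' proof with $\omega$-signals replacing reward signals---for instance, $T=\bar\sigma.\sum_{a\in X}\bar a.\omega$ for a failure witness $\langle\sigma,X\rangle\in\faild(Q)\setminus\faild(P)$, and $T=\bar\sigma.\omega$ for a divergence witness $\sigma\in\divd(Q)\setminus\divd(P)$.

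For the second equivalence, \Sec{dual} yields $P\sqsubseteq^{\rm dual}_{\rm must}Q$ iff $Q\Mayleq P$. The classical characterization of may-testing from~\cite{DH84} gives $Q\Mayleq P$ iff $\ptr(Q)\subseteq\ptr(P)$; by \thm{fp nonpositive reward characterisation} this matches $P\sqsubseteq_{\fpnpr}Q$.

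The main obstacle is the ``$\Rightarrow$'' direction of the first equivalence. Classical tests, unlike reward tests, rely on $\omega$-availability rather than on accumulated numerical signals, so the placement of $\omega$-offerings must be engineered carefully: adding $\omega$ too eagerly makes a test trivially passed by both $P$ and $Q$, while placing it only at the deepest witness state risks failing $P$ on partial completions of $\sigma$ (paths where $P$ performs a proper prefix and then deadlocks). Since $\sigma\notin\divd(P)$ and $\langle\sigma,X\rangle\notin\faild(P)$ respectively rule out divergence and refusal at the witness depth, the finer analysis must ensure that, at least for a suitably minimal witness, $P$'s paths through $\sigma$ neither diverge nor get stuck in an $\omega$-less region; verifying this carefully is the technical core.
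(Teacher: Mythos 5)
Your route is genuinely different from the paper's, and it has a real gap at exactly the point you flag as ``the technical core''. The paper does not go through the explicit failures/divergences characterisations at all: it proves the theorem by a direct translation between test formats. A must-test, normalised so that each path contains at most one success state, becomes a nonnegative reward test by assigning reward $1$ to every action leading into a success state and reward $0$ elsewhere; conversely (using that nonnegative reward testing loses no power when restricted to a single reward of value $1$ per computation), the target of each reward-$1$ transition is declared a success state. A computation is then successful iff its reward is $1$, and both directions of the first equivalence follow at once; the second is the same with reward $-1$. Your plan instead requires proving from scratch that $\Mustleq$ is characterised by $\divd(P)\supseteq\divd(Q)$, $\infd(P)\supseteq\infd(Q)$ and $\faild(P)\supseteq\faild(Q)$ --- essentially re-deriving the De Nicola--Hennessy failures/divergences characterisation, a much larger task than the theorem itself.

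The gap is in your ``$\Rightarrow$'' direction: the classical tests you write down do not witness the distinctions. Take $\langle\sigma,X\rangle\in\faild(Q)\setminus\faild(P)$ with $\sigma=a_1\cdots a_n$ and your $T=\bar\sigma.\sum_{a\in X}\bar a.\omega$. If $P$ can perform a proper prefix of $\sigma$ and then deadlock (e.g.\ $P=a_1$ with $n=2$), then $T|P$ deadlocks in a state where $\omega$ is nowhere enabled, so $P$ fails the test just as $Q$ does and nothing is learned; the same happens with $T=\bar\sigma.\omega$ for a divergence witness (e.g.\ $P={\bf 0}$, $\sigma=a$). Choosing a ``suitably minimal witness'' does not help: the defect is not in the choice of witness but in the shape of the test, which punishes $P$ for honest early termination. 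The standard repair is $T_i=\tau.\omega+\bar a_{i+1}.T_{i+1}$ for $i<n$, with $T_n=\sum_{a\in X}\bar a.\omega$ in the failure case and $T_n=\omega$ in the divergence case: the $\tau.\omega$ escape at every intermediate state turns any partial completion into a success, so that only a full execution of $\sigma$ followed by refusal of $X$ (or a divergence en route) yields an unsuccessful maximal computation. This is precisely the complication that the paper's reward tests avoid, since there a partial completion simply accumulates reward $0$ rather than incurring the penalty. Without this repair your argument fails; with it, it goes through, but you will have reproved a classical theorem instead of using the short bridge the statement calls for. (Your handling of the second equivalence, via $P\sqsubseteq^{\rm dual}_{\rm must}Q\Leftrightarrow Q\Mayleq P$ and the trace characterisation of may testing, is fine, modulo citing that characterisation.)
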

\begin{proof}
``If'': Without affecting $\Mustleq$ one may restrict attention to tests $T\in\Tests_{\rm CCS}$ with the
property that each path of $T$ contains at most one success state---one with an outgoing transition labelled $\omega$.
Namely, any outgoing transition of a success state may safely be omitted. Now each such test $T$ can
be converted into a nonnegative reward test $T'$, namely by assigning a reward $1$ to any action leading into a success state,
keeping the rewards of all other actions $0$. The success action itself may then be renamed into $\tau$, or omitted.
Now trivially, a computation  of $T|P$ is successful iff the matching computation of $T'$
yields a reward $1$; a computation  of $T|P$ is unsuccessful iff the matching computation of $T'$
yields a reward $0$. It follows that must-testing can be emulated by nonnegative reward testing.

``Only if'': As remarked in \Sec{weaker}, nonnegative reward testing looses no power when allowing
only one reward per computation. For the same reasons it looses no power if each positive reward is $1$.
Now any reward test $T'\in\Tests^R_{\rm CCS}$ with these restrictions can be converted to a test
$T\in\Tests_{\rm CCS}$ by making any target state of a reward-1 transition into a success state.
It follows that nonnegative reward testing can be emulated by must-testing.

The second statement follows in the same way, but using a reward $-1$.
\end{proof}

\begin{figure}[h]
\input{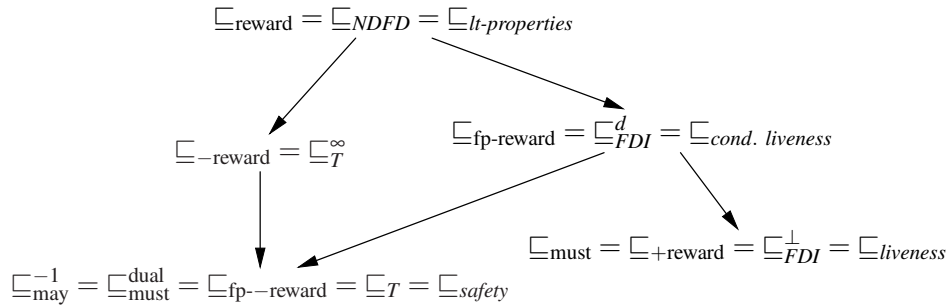}
\centerline{\raisebox{1ex}{\box\graph}}
\caption{A spectrum of testing preorders}
\label{spectrum}
\end{figure}

\noindent
A preorder $\sqsubseteq_X$ is said to be \emph{finer} than or equal to a preorder $\sqsubseteq_Y$ iff
$P \sqsubseteq_X Q \Rightarrow P \sqsubseteq_Y Q$ for all $P$ and $Q$; in that case $\sqsubseteq_Y$
is \emph{coarser} than or equal to $\sqsubseteq_X$.

\begin{theorem}{hierarchy}
The preorders occurring in this paper are related as indicated in \fig{spectrum}, where the arrows
point in the coarser direction.
\end{theorem}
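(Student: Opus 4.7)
The plan is to verify (i) the equalities inside each node of \fig{spectrum} and (ii) the inclusions indicated by the arrows. Each equality between a reward-testing preorder and its structural counterpart is already the content of an earlier theorem of this paper: \thm{reward characterisation} gives $\sqsubseteq_{\rm reward}={\sqsubseteq_{\it NDFD}}$, \thm{finite-penalty reward characterisation} gives $\sqsubseteq_{\fpr}={\sqsubseteq_{\it FDI}^d}$, \thm{nonnegative reward characterisation} gives $\sqsubseteq_{\nnr}={\sqsubseteq_{\it FDI}^\bot}$, \thm{nonpositive reward characterisation} gives $\sqsubseteq_{\npr}={\sqsubseteq_{\it T}^\infty}$, and \thm{fp nonpositive reward characterisation} gives $\sqsubseteq_{\fpnpr}={\sqsubseteq_{\it T}}$. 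The identifications $\Mustleq={\sqsubseteq_{\nnr}}$ and $\Mustleq^{\rm dual}={\sqsubseteq_{\fpnpr}}$ are \thm{must}, while ${\Mayleq^{-1}}={\Mustleq^{\rm dual}}={\sqsubseteq_{\it safety}}$ and $\Mustleq={\sqsubseteq_{\it liveness}}$ are recorded in \Sec{dual}. The labels $\sqsubseteq_{\it cond.\ liveness}$ and $\sqsubseteq_{\it lt-properties}$ are by definition the preorders that preserve precisely the conditional-liveness and the combined liveness-type property classes discussed in the Introduction, and coincide with $\sqsubseteq_{\fpr}$ and $\sqsubseteq_{\rm reward}$ by the same characterisation theorems.

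For the inclusions, every arrow in \fig{spectrum} reduces to the elementary observation that if $\Tests_1\subseteq\Tests_2$ then the preorder induced by $\Tests_2$ is finer than (or equal to) the one induced by $\Tests_1$, since quantifying over a larger set of tests is a strictly stronger requirement. All subclasses considered here lie inside the class of all reward tests, yielding the arrows out of $\sqsubseteq_{\rm reward}$; the fp-nonpositive class is contained in both the finite-penalty and the nonpositive classes, yielding the two arrows into $\sqsubseteq_{\fpnpr}={\Mustleq^{\rm dual}}$; and every nonnegative reward test is vacuously finite-penalty, yielding the arrow from $\sqsubseteq_{\fpr}$ to $\sqsubseteq_{\nnr}={\Mustleq}$.

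The argument is thus essentially bookkeeping and presents no real obstacle; the only thing to check is that each earlier characterisation result is read as a preorder equality, which is immediate since each is of the form ``$P\sqsubseteq_? Q\Leftrightarrow\Phi(P,Q)$'' for a structural condition $\Phi$ on pairs $(P,Q)$. If one additionally wants strictness of every arrow---not explicitly required by the statement but implicit in calling the picture a ``spectrum''---the processes of \fig{conditional} separate $\sqsubseteq_{\fpr}$ from $\Mustleq$, and analogous counterexamples based on the trace-theoretic characterisations (differing in divergence, infinite traces, or failures) can be constructed to separate each other pair of adjacent levels.
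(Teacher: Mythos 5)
Your proposal is correct and takes essentially the same route as the paper's own proof: the node equalities are delegated to the characterisation theorems and to \thm{must}, the arrows follow from containment of the respective test classes, and strictness is witnessed by explicit examples. The only imprecision is your claim that $\sqsubseteq_{\it cond.\ liveness}$ and $\sqsubseteq_{\it lt\mbox{-}properties}$ coincide with $\sqsubseteq_{\fpr}$ and $\sqsubseteq_{\rm reward}$ ``by the same characterisation theorems''---these are defined as coarsest \emph{precongruences} preserving the respective property classes, and their identification with $\sqsubseteq_{\it FDI}^d$ and $\sqsubseteq_{\it NDFD}$ is a separate result imported from the literature (as the paper does in \Sec{liveness}), not a consequence of the reward-testing characterisations alone.
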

\begin{proof}
The relations between $\sqsubseteq_{\rm reward}$,  $\sqsubseteq_{\fpr}$,  $\sqsubseteq_{\nnr}$,
$\sqsubseteq_{\npr}$ and $\sqsubseteq_{\fpnpr}$ follow immediately from the definitions, as the coarser
variant uses only a subset of the tests available to the finer variant.
The strictness of all these relations is obtained by the examples below.

The connections with $\Mustleq$, $\sqsubseteq^{\rm dual}_{\rm must}$ and the inverse of $\Mayleq$
are provided by \thm{must} and \Sec{dual}.
The characterisations in terms of $\sqsubseteq_{\it NDFD}$,  $\sqsubseteq_{\it FDI}^d$,
$\sqsubseteq_{\it FDI}^\bot$,  $\sqsubseteq_{\it F}^\infty$ and $\sqsubseteq_{\it T}$ are provided
by Theorems~\ref{thm:reward characterisation}--\ref{thm:fp nonpositive reward characterisation}.
The connections with $\sqsubseteq_{\textit{\scriptsize lt-properties}}$, $\sqsubseteq_{\it  cond.\ liveness}$,
$\sqsubseteq_{\it liveness}$ and $\sqsubseteq_{\it safety}$ will be established in \Sec{liveness}.
\end{proof}
  Let $a^n.P$ be defined by $a^0.P:=P$ and $a^{i+1}.P = a.a^i.P$.
  Furthermore, let \plat{$a^\infty:= \rec{\,X\defis a.X}$} be a process that performs infinitely many $a$s.
  Let $\Delta$ be the unary operator given by \plat{$\Delta P:= \rec{\,X\defis \tau.X + P~}$}. It first
performs $0$ or more $\tau$-actions, and if this number is finite subsequently behaves as its
argument $P$.
So $\Delta{\bf 0} = \tau^\infty$ just performs an infinite sequence of $\tau$-moves.

\begin{example}{fpt worse}
$\displaystyle\sum_{n\geq 1} a^n.\Delta{\bf 0} \equiv_{\fpr} a^\infty + \sum_{n\geq 1} a^n.\Delta{\bf 0}$, but
    $\displaystyle\sum_{n\geq 1} a^n.\Delta{\bf 0} \not\sqsubseteq_{\npr} a^\infty + \sum_{n\geq 1} a^n.\Delta{\bf 0}$
(and thus $\not\sqsubseteq_{\rm reward}$).\vspace{-1ex}
\end{example}

\begin{example}{fpt better}
$\Delta(c.g) \Musteq \Delta(c+c.g)$ and $\Delta(c.g) \equiv_{\npr} \Delta(c+c.g)$, yet
$\Delta(c.g) \not\sqsubseteq_{\fpr} \Delta(c+c.g)$.
These are the processes displayed in \fig{conditional}.
A test showing the latter is $c^{-1}\!.g^1$.
\end{example}

\begin{example}{must better}
$c.g \equiv_{\fpnpr} c+c.g$, yet $c.g \not\sqsubseteq_{\nnr} c+c.g$.
A test showing the latter is $c.g^1$.
\end{example}

\begin{example}{safety better}
$\Delta a \equiv_{\nnr} \Delta {\bf 0}$, yet $\Delta a \not\sqsubseteq_{\fpnpr} \Delta {\bf 0}$.
A test showing the latter is $a^{-1}$.
\end{example}

\noindent
A process $P$ is \emph{divergence-free} if $\diverg(P)=\emptyset$.
It is \emph{regular}, or \emph{finite-state}, if there only finitely many processes $Q$ such that
\plat{$\exists \sigma \in\A^*.~P \dto{\sigma} Q$}.
It is \emph{$\dto{}$-image-finite} if for each $\sigma\in A^*$ there are only finitely many $Q$ such that \plat{$P\dto{\sigma}Q$}.
Note that the class of $\dto{}$-image-finite processes is not closed under parallel composition, or under
renaming transition labels $a\in\A$ into $\tau$. Regular processes are $\dto{}$-image-finite.
Any $P\in\Proc_{\rm CCS}$ without parallel composition, relabelling or restriction is regular.
Any $P\in\Proc_{\rm CCS}$ without recursion is both divergence-free and regular.

\begin{proposition}{divergence-free}
If $P\in\Proc_{\rm CCS}$ is divergence-free, then
$P \sqsubseteq_{\nnr} Q$ iff $P \sqsubseteq_{\rm reward} Q$.
\end{proposition}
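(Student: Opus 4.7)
The plan is to invoke the two characterisation theorems (\thm{reward characterisation} and \thm{nonnegative reward characterisation}) and argue that, under divergence-freeness of $P$, the two sets of six conditions collapse onto each other. The forward implication $P \sqsubseteq_{\rm reward} Q \Rightarrow P \sqsubseteq_{\nnr} Q$ requires no hypothesis at all: every nonnegative reward test is itself a reward test, so agreement on the larger class of tests implies agreement on the smaller one. This is the arrow already drawn in \fig{spectrum}, and for the sake of a self-contained proof one can also read it directly off the characterisations, noting that $\diverg(P)\supseteq\diverg(Q)$ propagates to $\divd(P)\supseteq\divd(Q)$, and similarly that $\infinite$/$\faildd$-containment together with $\diverg$-containment yields $\infd$/$\faild$-containment.

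For the converse, assume $P$ is divergence-free, so $\diverg(P)=\emptyset$, and $P \sqsubseteq_{\nnr} Q$. By \thm{nonnegative reward characterisation} one has $\divd(P) \supseteq \divd(Q)$, $\infd(P) \supseteq \infd(Q)$ and $\faild(P) \supseteq \faild(Q)$. The key step is to observe that $\diverg(P)=\emptyset$ forces $\divd(P)=\emptyset$ directly from the definition of $\divd$, so $\divd(Q)=\emptyset$, and hence $\diverg(Q)=\emptyset$ as well, since $\diverg(Q)\subseteq\divd(Q)$ (take $\rho:=\epsilon$ in the defining clause of $\divd$). Thus divergence-freeness propagates from $P$ to $Q$ through the $\divd$-inclusion.

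Once both $P$ and $Q$ are divergence-free, the $\bot$-closures in $\infd$ and $\faild$ are vacuous: $\infd(R)=\infinite(R)$ and $\faild(R)=\failures(R)=\faildd(R)$ for $R\in\{P,Q\}$. So the remaining two $\nnr$-inclusions become $\infinite(P)\supseteq\infinite(Q)$ and $\faildd(P)\supseteq\faildd(Q)$, while $\diverg(P)\supseteq\diverg(Q)$ holds trivially as both are empty. By \thm{reward characterisation} this gives $P \sqsubseteq_{\rm reward} Q$.

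There is essentially no genuine obstacle—the argument is a bookkeeping exercise against the definitions of $\divd$, $\infd$, $\faild$ and $\faildd$. The one conceptually delicate point is recognising that divergence-freeness of $P$ alone suffices because it is transferred to $Q$ via the $\divd$-inclusion; without this transfer, the $\bot$-closures on the $Q$-side would not vanish and the reduction between the two characterisations would fail.
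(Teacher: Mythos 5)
Your proof is correct and follows essentially the same route as the paper: both directions are reduced to Theorems~\ref{thm:reward characterisation} and~\ref{thm:nonnegative reward characterisation}, using that divergence-freeness collapses $\infd$, $\infdd$, $\infinite$ and $\faild$, $\faildd$, $\failures$, and that divergence-freeness transfers from $P$ to $Q$ (the paper phrases this last point as the parenthetical remark that neither preorder can hold when $Q$ diverges). Your explicit derivation of $\diverg(Q)=\emptyset$ from $\divd(P)\supseteq\divd(Q)$ is a clean way of making that transfer precise.
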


\begin{proof}
This follows immediately from Theorems~\ref{thm:reward characterisation}
and~\ref{thm:nonnegative reward characterisation}, using that
$\diverg(P)=\emptyset$, $\infd(P)=\infdd(P)=\infinite(P)$ and
$\faild(P)=\faildd(P)=\failures(P)$.
(In case $Q$ is not divergence-free one has neither $P \sqsubseteq_{\nnr} Q$ nor $P \sqsubseteq_{\rm reward} Q$.)
\end{proof}

\begin{proposition}{image-finite}
If $P$ is $\dto{}$-image-finite then (a)
$P \sqsubseteq_{\fpnpr} Q$ iff $P \sqsubseteq_{\npr} Q$\\ and (b) $P \sqsubseteq_{\fpr} Q$ iff $P \sqsubseteq_{\rm reward} Q$.
\end{proposition}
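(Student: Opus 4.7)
Both parts will reduce to the single K\"onig-style claim: if $P$ is $\dto{}$-image-finite and $\ptr(P) \supseteq \ptr(Q)$, then $\infinite(P) \supseteq \infinite(Q)$. Granted this, (a) is immediate from \thm{nonpositive reward characterisation} and \thm{fp nonpositive reward characterisation}, whose characterisations differ only in the extra requirement $\infinite(P) \supseteq \infinite(Q)$ demanded by $\sqsubseteq_{\npr}$. For (b) I invoke \thm{reward characterisation} and \thm{finite-penalty reward characterisation}. The direction $\sqsubseteq_{\rm reward} \Rightarrow \sqsubseteq_{\fpr}$ holds unconditionally, since $\diverg(P) \supseteq \diverg(Q)$ together with $\infinite(P) \supseteq \infinite(Q)$ yield $\infdd(P) \supseteq \infdd(Q)$ directly from the definition of $\infdd$. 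For the reverse direction, $\sqsubseteq_{\fpr}$ gives $\faildd(P) \supseteq \faildd(Q)$, from which $\ptr(P) \supseteq \ptr(Q)$ follows by \hyperlink{star}{(*)}; the claim then supplies the missing $\infinite(P) \supseteq \infinite(Q)$, and $\sqsubseteq_{\rm reward}$ follows.

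To prove the claim, fix $\nu = a_1 a_2 \cdots \in \infinite(Q)$. Every finite prefix $a_1 \cdots a_n$ lies in $\ptr(Q) \subseteq \ptr(P)$, and any witness $P \dto{a_1 \cdots a_n} R$ factors, by induction on $n$ using the inductive definition of $\dto{}$, into single weak transitions $P = R_0 \dto{a_1} R_1 \dto{a_2} \cdots \dto{a_n} R_n$. I treat these factorisations as the level-$n$ nodes of a tree rooted at $(P)$, with parent relation given by prefix restriction. Every level is nonempty, so the tree has infinitely many nodes. Each node $(R_0,\ldots,R_n)$ has children indexed by $\{R : R_n \dto{a_{n+1}} R\}$, which sits inside $\{R : P \dto{a_1 \cdots a_{n+1}} R\}$ and is therefore finite by $\dto{}$-image-finiteness of $P$. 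K\"onig's lemma then yields an infinite branch $P = R_0 \dto{a_1} R_1 \dto{a_2} \cdots$, which witnesses $\nu \in \infinite(P)$.

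The main obstacle, minor but worth flagging, will be justifying finite branching using only $\dto{}$-image-finiteness of $P$, not of the intermediate states $R_n$. The inclusion $\{R : R_n \dto{a_{n+1}} R\} \subseteq \{R : P \dto{a_1 \cdots a_{n+1}} R\}$ is what does the work: concatenating $P \dto{a_1 \cdots a_n} R_n$ with $R_n \dto{a_{n+1}} R$ produces $P \dto{a_1 \cdots a_{n+1}} R$, so every child of a node has its last component in a set already known to be finite. Once this observation is in place, the remainder of the argument is a routine invocation of K\"onig's lemma and the quoted characterisation theorems.
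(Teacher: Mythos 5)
Your proof is correct and follows essentially the same route as the paper's: both reduce the proposition to the K\"onig-style claim that for a $\dto{}$-image-finite $P$, $\ptr(P)\supseteq\ptr(Q)$ forces $\infinite(P)\supseteq\infinite(Q)$, and then invoke the characterisation theorems (with \hyperlink{star}{(*)} for part (b)). The paper states the K\"onig step in one line, whereas you spell out the tree construction and the point that finite branching needs image-finiteness only at $P$ rather than at the intermediate states --- a worthwhile clarification, but not a different argument.
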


\begin{proof}
By K\"onigs lemma $\nu\in\A^\infty$ is an infinite trace of $P$ iff only if each finite prefix of $\nu$ is a partial trace of $P$.
Now (a) follows immediately from Theorems~\ref{thm:nonpositive reward characterisation}
and~\ref{thm:fp nonpositive reward characterisation}: Suppose $P \sqsubseteq_{\npr} Q$ and $\nu\in\infinite(Q)$.
Then each finite prefix of $\nu$ is in $\ptr(Q)$ and thus in $\ptr(P)$. Thus $\nu\in\infinite(P)$.

(b) follows in the same way from  Theorems~\ref{thm:reward characterisation}
and~\ref{thm:finite-penalty reward characterisation}, using \hyperlink{star}{(*)}.
\end{proof}

\section{Conditional liveness properties}\label{sec:liveness}

To obtain a general liveness property for labelled
transition systems, assume that some notion of \emph{good} is defined.
Now, to judge whether a process $P$ satisfies this liveness property,
one should judge whether $P$ can reach a state in which one would say
that something good had happened. But all observable behaviour of $P$
that is recorded in a labelled transition system until one comes to such
a verdict, is the sequence of visible actions performed until that
point. Thus the liveness property is completely determined by the set
sequences of visible actions that, when performed by $P$, lead to
such a judgement.  Therefore one can just as well define
a liveness property in terms of such a set.
\begin{definition}{liveness}
A \emph{liveness property} of processes in an LTS is given by a set
$G\subseteq \A^*$.
A process $P$ \emph{satisfies} this liveness property, notation
$P\models \textit{liveness}(G)$, when each complete trace of $P$ has a
prefix in $G$.
\end{definition}
This formalisation of liveness properties stems from \cite{vG10} and is
essentially different from the one in \cite{AS85} and most subsequent work
on liveness properties; this point is discussed in \cite[Section 6]{vG10}.

A preorder $\sqsubseteq$ \emph{preserves} liveness properties if $P \sqsubseteq Q$
implies that $Q$ enjoys any liveness property that $P$ has. It is a \emph{precongruence} for an
$n$-ary operator $\textit{op}$ if $P_i \sqsubseteq Q_i$ for $i=1,\dots,n$ implies
$\textit{op}(P_1,\dots,P_n) \sqsubseteq \textit{op}(Q_1,\dots,Q_n)$.
Now let $\sqsubseteq_{\it liveness}$ be the coarsest preorder that is a precongruence for the
operators of CSP and preserves liveness properties. In \cite{vG10} it is shown that this preorder
exists, and equals $\sqsubseteq_{FDI}^\bot$, as defined in the proof of \thm{nonnegative reward characterisation}.
The proof of this result does not require that $\sqsubseteq_{\it liveness}$ be a preorder for all
operators of CSP; it goes through already when merely requiring it to be precongruence for 
injective renaming and partially synchronous interleaving operators. Looking at this proof, the same can
also be obtained requiring $\sqsubseteq_{\it liveness}$ to be a precongruence for the CCS operators
$|$, $\backslash\RL$ and injective relabelling.

It follows that $\sqsubseteq_{\it liveness}$ coincides with $\sqsubseteq_{\nnr}$ (cf.\ \thm{hierarchy}).
This connection can be illustrated by a translation from liveness properties $G\subseteq\A^*$
(w.l.o.g.\ assumed to have the property that if $\sigma\in G$ then $\sigma\rho\notin G$ for any $\rho\neq\epsilon$)
to nonnegative reward tests $T_G$. Here $T_G$ can be rendered as a deterministic tree
in which all transitions completing a trace from $\bar G$ yield a reward $1$, so that all computations of $T|P$ earn
a positive reward iff $P\models \textit{liveness}(G)$.

One obtains a general concept of safety property by means of the same argument as for liveness
properties above, but using ``bad'' instead of ``good''.

\begin{definition}{safety}
A \emph{safety property} of processes in an LTS is given by a set
$B\subseteq \A^*$.
A process $P$ \emph{satisfies} this safety property, notation
$P\models \textit{safety}(B)$, when $\ptr(p)\cap B=\emptyset$.
\end{definition}
This formalisation of safety properties stems from \cite{vG10} and is in line with the one in \cite{AS85}.
Now let $\sqsubseteq_{\it safety}$ be the coarsest precongruence (for the same choice of operators as above) that
preserves safety properties. In \cite{vG10} it is shown that this preorder
exists, and equals $\sqsubseteq_{T}$, as defined in the proof of \thm{fp nonpositive reward characterisation}.

It follows that $\sqsubseteq_{\it safety}$ coincides with $\sqsubseteq_{\fpnpr}$ (cf.\ \thm{hierarchy}).
This connection can be illustrated by a translation from safety properties $B\subseteq\A^*$
(w.l.o.g.\ assumed to have the property that if $\sigma\in B$ then $\sigma\rho\notin B$ for any $\rho\neq\epsilon$)
to nonnegative reward tests $T_B$. Here $T_B$ can be rendered as a deterministic tree
in which all transitions completing a trace from $\bar B$ yield a reward $-1$, so that all computations of $T|P$ earn
a nonnegative reward iff $P\models \textit{safety}(B)$.

A conditional liveness property says that \emph{under certain conditions something good will eventually happen}.
To obtain a general conditional liveness property for LTSs,
assume that some condition, and some notion of \emph{good} is defined.
Now, to judge whether a process $P$ satisfies this conditional liveness property,
one should judge first of all in which states the condition is fulfilled.
All observable behaviour of $P$ that is recorded in an LTS
until one comes to such a verdict, is the sequence of visible actions
performed until that point. Thus the condition is completely
determined by the set of sequences of visible actions that, when
performed by $P$, lead to such a judgement. Next one should judge
whether $P$ can reach a state in which one would say that something
good had happened. Again, this judgement can be expressed in terms of
the sequences of visible actions that lead to such a state.

\begin{definitionA}{\cite{vG10}}{conditional liveness}
A \emph{conditional liveness property} of processes in an LTS is given
by two sets $C,G\subseteq \A^*$.
A process $P$ \emph{satisfies} this conditional liveness property, notation
$P\models \textit{liveness}_C(G)$, when each complete trace of
$P$ that has a prefix in $C$, also has a prefix in $G$.
\end{definitionA}
Now let $\sqsubseteq_{\it cond.\ liveness}$ be the coarsest precongruence (for the same choice of operators as above) that
preserves conditional liveness properties. In \cite{vG10} it is shown that this preorder
exists, and equals $\sqsubseteq_{FDI}^d$, as defined in the proof of \thm{finite-penalty reward characterisation}.\pagebreak[2]
It follows that $\sqsubseteq_{\it safety}$ coincides with $\sqsubseteq_{\fpr}$ (cf.\ \thm{hierarchy}).
Similar to the above cases, this connection can be illustrated by a translation from conditional
liveness properties $C,G\subseteq\A^*$ to reward tests in which each computation has at most one
negative and one positive reward, which are always $-1$ and $+1$.

\begin{definition}{LT}
A \emph{linear time property} of processes in an LTS is given by a set
$\Phi\subseteq \A^*\cup\A^\infty$ of finite and infinite sequences of actions.
A process $P$ \emph{satisfies} this property, notation
$P\models \Phi$, when $\ct(P)\subseteq \Phi$.
\end{definition}
A liveness property is a special kind of linear time property:\\
\mbox{}\hspace{1.6cm}$\emph{liveness}(G)= \{\sigma\in\A^*\cup\A^\infty \mid \exists \rho\in G.\; \rho\leq\sigma\}$.\\
Likewise,
$\emph{safety}(B) =\{\sigma\mathbin\in\A^*\cup\A^\infty \mid \neg \exists \rho\mathbin\in B.\; \rho\leq\sigma\}$,
and\\ \mbox{}\hspace{1.6cm}$\emph{liveness}_C(G)=
\{\sigma\in\A^*\cup\A^\infty \mid (\exists \rho\in C.\; \rho\leq\sigma)
\Rightarrow (\exists \nu\in G.\; \nu\leq\sigma)\}$.

Now let $\sqsubseteq_{lt.\ properties}$ be the coarsest precongruence (for the same choice of operators as above) that
preserves linear time properties. In \cite{KV92,vG10} it is shown that this preorder
exists, and equals $\sqsubseteq_{NDFD}$, as defined in the proof of \thm{reward characterisation}.
It follows that $\sqsubseteq_{\it lt.\ properties}$ coincides with $\sqsubseteq_{\rm reward}$ (cf.\ \thm{hierarchy}).

\section{Congruence properties}

\begin{theorem}{congruence}
The preorders of this paper are precongruences for the CCS operators $|$, $\backslash\RL$ and $[f]$.
\end{theorem}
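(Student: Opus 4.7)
The plan is to establish the theorem by two complementary techniques: a direct manipulation of test contexts for parallel composition, and the trace-based characterizations of Theorems~\ref{thm:reward characterisation}--\ref{thm:fp nonpositive reward characterisation} for restriction and relabelling. Throughout, I use that each preorder $\sqsubseteq_X$ occurring in the paper is characterized by a finite conjunction of set inclusions $f_i(P) \supseteq f_i(Q)$ for semantic functions $f_i$ drawn from $\{\diverg, \infinite, \ptr, \failures, \faildd, \infdd, \divd, \infd, \faild\}$, with the may-variants obtained by reversal (\thm{inverse}).

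For parallel composition: fix $P \sqsubseteq_X Q$, a process $R$, and a test $T$ of the class defining $\sqsubseteq_X$. By the structural rules of CCS, the compositions $T \mid (P \mid R)$ and $(T \mid R) \mid P$ are strongly bisimilar and so give rise to reward computations with identical reward sets. Crucially, $R$---being a CCS process---is a reward test in which every reward is $0$, so all rewards collected along a path of $T\mid R$ come from its $T$-component. Consequently $T\mid R$ inherits any of the structural constraints that $T$ satisfies (nonnegativity, nonpositivity, finite-penalty property, single-penalty property, etc.), and therefore $T\mid R$ belongs to the same class as $T$. Applying $P \sqsubseteq_X Q$ to the test $T\mid R$ gives $\Apply(T\mid R, P) \Sleq \Apply(T\mid R, Q)$, which by bisimilarity is $\Apply(T, P\mid R) \Sleq \Apply(T, Q\mid R)$, as required.

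For restriction $\backslash L$ and relabelling $[g]$, I argue on the semantic side. Direct inspection of the operational rules \myref{Res$'$} and \myref{Rel$'$} yields compositional formulas of the form
\[
f_i(P \backslash L) \;=\; f_i(P) \cap R_L, \qquad f_i(P[g]) \;=\; \hat g(f_i(P)),
\]
where $R_L$ is the obvious restriction predicate on $\A^* \cup \A^\infty$ (or on $\A^* \times \Pow(\A)$ for failures-type $f_i$) cutting out traces mentioning actions in $L \cup \bar L$, and $\hat g$ is the pointwise extension of $g$ to traces and refusal sets. Both operations are monotone with respect to set inclusion, so the inclusions $f_i(P) \supseteq f_i(Q)$ that witness $P \sqsubseteq_X Q$ transport directly to $f_i(P \backslash L) \supseteq f_i(Q \backslash L)$ and $f_i(P[g]) \supseteq f_i(Q[g])$.

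The main obstacle is checking these compositional formulas uniformly across the divergence-closed variants $\faildd, \infdd, \faild, \infd, \divd$. However, each of these is defined by closing a primitive set ($\failures$, $\infinite$, $\diverg$) under a monotone prefix-based operation involving $\diverg(P)$; since both the primitive sets and $\diverg$ already enjoy the compositional equalities above, the equalities extend to their closures by routine calculation. With these in hand, precongruence for all preorders listed in \fig{spectrum} follows uniformly.
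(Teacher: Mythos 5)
Your argument for parallel composition is essentially the paper's: both absorb the context into the test via $\Apply(T,R|P)=\Apply(T|R,P)$ and associativity up to strong bisimilarity. Your explicit observation that $T|R$ stays in the same test class because $R$ contributes only zero rewards is a point the paper leaves implicit, and it is indeed needed (and correct: a decomposition argument shows $T|R$ inherits the finite/single-penalty and sign constraints from $T$).

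For restriction and relabelling you genuinely diverge from the paper. The paper stays on the testing side throughout, using $\Apply(T,P\backslash\RL)=\Apply(T\backslash\RL,P)$ and $\Apply(T,P[f])=\Apply(T[f^{-1}],P)$, where $[f^{-1}]$ is an auxiliary inverse-relabelling operator whose results it then argues are expressible in CCS. You instead go through the denotational characterisations of Theorems~\ref{thm:reward characterisation}--\ref{thm:fp nonpositive reward characterisation} and argue monotone compositionality of the semantic functions. Both routes work; yours avoids the slightly delicate appeal to expressibility of $T[f^{-1}]$ in CCS, but pays for it by depending on the characterisation theorems, whereas the paper's proof is self-contained and uniform across all three operators. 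One caveat on your side: the compositional formulas are not quite as you state them for the failures-type sets. For restriction, $\failures(P\backslash\RL)$ is \emph{not} $\failures(P)$ intersected with a predicate on traces: since $Q\backslash\RL$ automatically refuses all of $\RL\cup\bar\RL$, the correct formula is $\{\langle\sigma,X\rangle \mid \sigma\in R_{\RL},\ \langle\sigma,X\setminus(\RL\cup\bar\RL)\rangle\in\failures(P)\}$ (e.g.\ $\langle\epsilon,\{b\}\rangle$ is a failure of $(a+b)\backslash b$ but not of $a+b$). For relabelling with non-injective $f$, the pointwise image of a refusal set is wrong (if $f(a)=f(b)=c$ then $\langle\epsilon,\{b\}\rangle\in\failures(a.{\bf 0})$ but $\langle\epsilon,\{c\}\rangle\notin\failures(a.{\bf 0}[f])$); the correct formula uses inverse images of refusal sets. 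Both corrected formulas are still monotone in $\failures(P)$ together with $\diverg(P)$, which is all your argument actually uses, so the proof survives, but the ``routine calculation'' step deserves these corrections before it is routine.
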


\begin{proof}
Note that $\Apply(T,R|P)=\Apply(T|R,P)$, using the associativity (up to strong bisimilarity) of~$|$.
Therefore $P \sqsubseteq_{\rm reward} Q$ implies $R|P \sqsubseteq_{\rm reward} R|Q$, showing that
$\sqsubseteq_{\rm reward}$ is a precongruence for parallel composition. The same holds for
$\sqsubseteq_{\fpr}$, $\sqsubseteq_{\nnr}$, $\sqsubseteq_{\npr}$ and $\sqsubseteq_{\fpnpr}$.

Likewise  $\Apply(T,P\backslash\RL)=\Apply(T\backslash\RL,P)$. This yields precongruence results for restriction.

Finally, $\Apply(T,P[f])=\Apply(T[f^{-1}],P)$, yielding  precongruence results for relabelling.\\
Here $[f^{-1}]$ is an operator with rule $\displaystyle\frac{E \goto{\alpha,r} E'}{E[f^{-1}] \goto{\beta,r} E'[f^{-1}]}~~(f(\beta)=\alpha)$.
Although this is not a CCS operator, for any test $T$ the test $T[f^{-1}]$ is expressible in CCS,
on grounds that each process in an LTS is expressible in CCS\@.
\end{proof}

\begin{theorem}{action prefixing}
The preorders of this paper are precongruences for action prefixing.
\end{theorem}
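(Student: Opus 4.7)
My plan is to invoke the trace-based characterisations established in Theorems~\ref{thm:reward characterisation}--\ref{thm:fp nonpositive reward characterisation}, which express each of the preorders $\sqsubseteq_{\rm reward}$, $\sqsubseteq_{\fpr}$, $\sqsubseteq_{\nnr}$, $\sqsubseteq_{\npr}$ and $\sqsubseteq_{\fpnpr}$ as a conjunction of reverse inclusions $X(P)\supseteq X(Q)$ for trace sets $X$ drawn from $\{\ptr,\diverg,\infinite,\faildd,\infdd,\divd,\infd,\faild\}$. It therefore suffices to verify, for each such $X$ and each $\alpha\in Act$, the monotonicity statement $X(P)\supseteq X(Q)\Rightarrow X(\alpha.P)\supseteq X(\alpha.Q)$; the may-variants then follow by \thm{inverse}.

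For $\alpha=\tau$ the only outgoing transition of $\tau.R$ is $\tau.R\goto{\tau}R$, so $\tau.R\dto{\sigma}Q$ iff $\sigma=\epsilon$ and $Q=\tau.R$, or $R\dto{\sigma}Q$. Since $\ini(\tau.R)=\{\tau\}$, the extra state $\tau.R$ contributes nothing to $\failures$ or $\diverg$ beyond what $R$ already contributes, and a routine check shows that all eight trace sets above take the same value on $\tau.R$ and $R$; monotonicity is then immediate. For $\alpha=a\in\A$, every non-trivial weak transition of $a.R$ must first fire $a.R\goto{a}R$ and then proceed as a weak transition of $R$. A short case analysis yields
\[
\ptr(a.R)=\{\epsilon\}\cup\{a\sigma\mid\sigma\in\ptr(R)\},\quad
\diverg(a.R)=\{a\sigma\mid\sigma\in\diverg(R)\},
\]
\[
\infinite(a.R)=\{a\nu\mid\nu\in\infinite(R)\},\quad
\faildd(a.R)=\{\langle\epsilon,X\rangle\mid a\notin X\}\cup\{\langle a\sigma,X\rangle\mid\langle\sigma,X\rangle\in\faildd(R)\},
\]
and analogous formulas for $\faild$, $\divd$, $\infd$ and $\infdd$. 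Each is of the shape \emph{fixed contribution at $\epsilon$} plus \emph{$a$-prefix of the same set for $R$}, from which the required monotonicity follows at once.

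The one identity warranting care is $\infdd(a.R)=\{a\nu\mid\nu\in\infdd(R)\}$: a trace $\nu$ in the second clause of $\infdd(a.R)$ must start with $a$, for otherwise its first-letter prefix admits no witness in $\diverg(a.R)\subseteq\{a\}\cdot\A^*$; once we write $\nu=a\nu'$, unpacking the prefix-divergence condition at $\sigma=\epsilon$ and at $\sigma=a\sigma'<\nu$ reduces it exactly to the analogous condition for $\nu'$ relative to $\diverg(R)$. This is the only bookkeeping step of any substance; everything else unfolds directly from the definitions. Combining the two cases completes the proof for all five preorders.
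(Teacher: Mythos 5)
Your proposal is correct and follows essentially the same route as the paper, which likewise reduces the claim to the trace-based characterisations of Theorems~\ref{thm:reward characterisation}--\ref{thm:fp nonpositive reward characterisation} and the observation that each trace set of $\alpha.P$ is a fixed part plus an injective ($a$-prefixing) image of the corresponding set for $P$. You simply carry out in full the routine verifications (including the only mildly delicate case, $\infdd(a.R)=\{a\nu\mid\nu\in\infdd(R)\}$, which you handle correctly) that the paper leaves to the reader.
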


\begin{proof}
This follows in a straightforward way from the characterisations of the preorders in
Sections~\ref{sec:explicit} and~\ref{sec:weaker}.
For instance, $\faildd(a.P)=\{\langle a\sigma,X\rangle \mid \langle \sigma,X\rangle \in \faildd(a.P)\}$.
\end{proof}
In the same way it follows that $\sqsubseteq_{\fpnpr}$ and $\sqsubseteq_{\npr}$ are precongruences
for the CCS operator $+$.
However, the preorders $\sqsubseteq_{\rm reward}$, $\sqsubseteq_{\fpr}$ and $\sqsubseteq_{\nnr}$
fail to be congruences for choice:

\begin{example}{choice congruence}
  $ {\bf 0} \equiv_{\rm reward} \tau$, yet ${\bf 0}+a \not\sqsubseteq_{\nnr} \tau+a$,
  using that $\langle\epsilon,\A\rangle\in\faild(\tau+a)\setminus\faild({\bf 0}+a)$.
\end{example}
This issue occurs for almost all semantic equivalences and preorders that abstract from internal
actions. The standard solution is to replace each such preorder $\sqsubseteq_X$
by the coarsest precongruence for the operators of CCS that is finer than $\sqsubseteq_X$.
Let $\textit{stable}$ be the predicate that holds for a process $P$ iff there is no $P'$ with $P\ar{\tau}P'$.
Write $P\sqsubseteq^\tau_X Q$ iff $P\sqsubseteq_X Q \wedge (\textit{stable}(P) \mathbin\Rightarrow \textit{stable}(Q))$.

\begin{theorem}{closure}
  Let $X\in\{\scriptstyle{\rm reward},~\fpr,~\nnr\}$. Then $\sqsubseteq^\tau_X$
  is the coarsest precongruence for the operators of CCS that is contained in $\sqsubseteq_X$.
\end{theorem}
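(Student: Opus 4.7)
The plan is to verify three standard conditions: (a) $\sqsubseteq^\tau_X$ is a precongruence for every CCS operator, (b) ${\sqsubseteq^\tau_X} \subseteq {\sqsubseteq_X}$, which is immediate from the definition, and (c) any precongruence $\sqsubseteq$ for CCS with ${\sqsubseteq} \subseteq {\sqsubseteq_X}$ is already contained in $\sqsubseteq^\tau_X$. Reflexivity and transitivity of $\sqsubseteq^\tau_X$ are obvious. For (a), precongruence of $\sqsubseteq_X$ for $|$, $\backslash L$, $[f]$ and action prefixing is already provided by \thm{congruence} and \thm{action prefixing}, and the extra root condition is preserved for each: for $\alpha.P$ stability depends only on $\alpha$; for $P[f]$ and $P\backslash L$ on $P$; and for $P|Q$ on stability of both components plus absence of a synchronising pair of initial actions, the latter controlled by the observation that $P\sqsubseteq^\tau_X Q$ with both $P$ and $Q$ stable forces $\ini(P)=\ini(Q)$ (one inclusion from the empty-trace entries of $\faildd$, the other from $\ptr$ via the identity \hyperlink{star}{(*)}). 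Precongruence for recursive processes follows by a standard continuity argument, since every witness in the relevant trace, divergence or failure set of a recursive process arises from a finite unfolding.

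The essential case of (a) is choice. Given $P_i\sqsubseteq^\tau_X P'_i$ for $i=1,2$, I would invoke the characterisations of $\sqsubseteq_X$ to compute each relevant set for $P_1+P_2$ versus $P'_1+P'_2$. On $\diverg$, $\infinite$, $\infdd$, $\infd$, $\divd$, and at non-empty traces of $\faildd$ and $\faild$, each set decomposes as the union over the summands, so the required inclusion follows directly from $\sqsubseteq_X$ applied to each $P_i$. The delicate case is $\langle\epsilon,X\rangle\in\failures(P'_1+P'_2)$. If both $P'_i$ are stable, then by the root condition so are both $P_i$; in that situation $\sqsubseteq_X$ forces $\ini(P_i)=\ini(P'_i)$, yielding $\langle\epsilon,X\rangle\in\failures(P_1+P_2)$. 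If instead some $P'_i$ is unstable, the contrapositive of the root condition makes $P_i$ unstable too, and the inclusion $\faildd(P_i)\supseteq\faildd(P'_i)$ provides either a divergence of $P_i$ or a stable descendant reached via at least one $\tau$-step, both of which lift to $P_1+P_2$ using the initial $\tau$-move out of $P_i$. Preservation of the root condition from $P_1+P_2$ to $P'_1+P'_2$ is automatic: if $P_1+P_2$ is stable then both $P_i$ are stable, by the root condition both $P'_i$ are, so $P'_1+P'_2$ is.

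For (c), suppose $\sqsubseteq$ is a CCS precongruence with ${\sqsubseteq} \subseteq {\sqsubseteq_X}$ and $P\sqsubseteq Q$. Then $P\sqsubseteq_X Q$. For the root condition I would argue by contradiction: assume $P$ is stable but $Q$ is not. Since $\Ch$ is infinite, pick a relabelling $f:\Ch\to\Ch$ whose image omits some name $c$ and form the CCS context $C[\cdot] := [\cdot][f] + c$. By the precongruence assumption $C[P]\sqsubseteq C[Q]$, hence $C[P]\sqsubseteq_X C[Q]$. Now $P[f]$ is stable, so $C[P]$ is stable and divergence-free with $c\in\ini(C[P])$, placing $\langle\epsilon,\{c\}\rangle$ outside both $\faildd(C[P])$ and $\faild(C[P])$, and $\epsilon$ outside $\diverg(C[P])$ and $\divd(C[P])$. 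From $Q\ar{\tau} Q'$ we get $C[Q]\ar{\tau} Q'[f]$, and no descendant of $Q'[f]$ can initially perform $c$, since $c$ lies outside the image of $f$. Either $Q'[f]$ diverges, so $C[Q]$ diverges, putting $\epsilon$ in $\diverg(C[Q])$ and $\divd(C[Q])$; or some $\dto{\epsilon}$-descendant of $Q'[f]$ is stable, putting $\langle\epsilon,\{c\}\rangle$ in $\failures(C[Q])\subseteq\faildd(C[Q])\subseteq\faild(C[Q])$. Either alternative contradicts $C[P]\sqsubseteq_X C[Q]$ for each of the three characterisations of $X$.

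The main obstacle is the delicate empty-trace case analysis in the choice case of (a), where the various combinations of stable or unstable summands and the possibility of divergence must all be tracked uniformly across the three preorders; the recursion case is routine but notation-heavy.
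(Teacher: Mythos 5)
Your overall architecture matches the paper's: explicit decomposition of the semantic sets over $+$ together with stability-transfer checks for the remaining operators, and a ``relabel, then add a fresh summand'' context for coarseness. The coarseness half is correct and is essentially the paper's argument (which picks $a\notin\ptr(Q)$ after an injective relabelling and uses the context $\_\!\_+a$). However, two steps in the precongruence half do not hold as written. In the choice case you claim ``if both $P'_i$ are stable, then by the root condition so are both $P_i$''. The root condition is $\textit{stable}(P_i)\Rightarrow\textit{stable}(P'_i)$, so it gives nothing in that direction: for instance $\tau.a\sqsubseteq^\tau_X a$ holds (the side condition is vacuous because $\tau.a$ is unstable), so the $P_i$ may be unstable while the $P'_i$ are stable, and your appeal to $\ini(P_i)=\ini(P'_i)$ collapses. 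The subcase is still true, but for a different reason: when both $P'_i$ are stable one gets $\langle\epsilon,X\rangle\in\faild(P'_1)\cap\faild(P'_2)\subseteq\faild(P_1)\cap\faild(P_2)$, and any pair in this intersection lifts to $\faild(P_1+P_2)$ --- each $P_i$ either diverges, or reaches a refusing stable state after at least one $\tau$ (which resolves the choice), or is itself stable and refusing, and every combination yields the required failure of the sum. This is precisely the third clause of the paper's formula for $\faild(P+Q)$; your other branch, which uses the contrapositive of the root condition when some $P'_i$ is unstable, is the part that is right.

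The second problem is the claim that ``precongruence for recursive processes follows by a standard continuity argument, since every witness \ldots arises from a finite unfolding''. This is false --- divergences and infinite traces of $\rec{S}$ are not witnessed by finite unfoldings --- and the paper refutes the conclusion outright: \ex{recursion congruence} shows that $\Mustleq^\tau$, $\sqsubseteq^\tau_{\rm reward}$ and the other two preorders fail to be precongruences for recursion, and the Roscoe--Levy example explains why no approximation argument can work for the infinite-trace component. The theorem survives only because ``the operators of CCS'' is to be read as excluding the recursion construct; the paper's own proof treats only $+$, action prefixing, $|$, $\backslash\RL$ and $[f]$. You should delete the recursion clause rather than attempt to prove it.
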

\begin{proof}
  That $\sqsubseteq^\tau_{\nnr}$ is a precongruence for $+$ follows with \thm{nonnegative reward characterisation} since
  \[\begin{array}[t]{@{}r@{~}l@{}}
\textit{stable}(P+Q) ~\Leftrightarrow& ~\textit{stable}(P) \wedge \textit{stable}(Q)\\[1ex]
 \faild(P+Q) = &  \{\langle \sigma,X \rangle \in \faild(P)\mid \sigma\neq\epsilon \vee \neg\textit{stable}(P)\} \cup \mbox{}\\
     & \{\langle \sigma,X \rangle \in \faild(Q)\mid \sigma\neq\epsilon \vee \neg\textit{stable}(Q)\} \cup \mbox{} \\
     & \{\langle \epsilon,X \rangle \mid \langle \epsilon,X \rangle\in\faild(P) \cap \faild(Q)\},\\[1ex]
       \infd(P+Q) ~=& \infd(P) \cup \infd(Q)\\
       \divd(P+Q) ~=& \divd(P) \cup \divd(Q)\;.
    \end{array}
  \]
  That it is a congruence for action prefixing, $|$, $\backslash\RL$ and $[f]$ follows since
  \[\begin{array}{@{}r@{~~\mbox{iff}~~}l@{}}
  \textit{stable}(\alpha.P) & \alpha\neq\tau\\
  \textit{stable}(P|Q) & \textit{stable}(P) \wedge \textit{stable}(Q) \wedge {\neg\exists}
  a\in\A\!.~(\langle a,\emptyset\rangle \in \faild(P) \wedge \langle \bar a,\emptyset\rangle \in \faild(P))\\
  \textit{stable}(P\backslash\RL) & \textit{stable}(P)\\
  \textit{stable}(P[f]) & \textit{stable}(P).\end{array}\]
  By definition, $\sqsubseteq_{\nnr}^\tau$ is contained in $\sqsubseteq_{\nnr}$.
  To see that it is the coarsest precongruence contained in $\sqsubseteq_{\nnr}$, suppose
  $P\not\sqsubseteq_{\nnr}^\tau Q$. It suffices to build a context $C[\_\!\_]$ from CCS operators such
  that $C[P]\not\sqsubseteq_{\nnr} C[Q]$. The case $P\not\sqsubseteq_{\nnr} Q$ is immediate---take the
  trivial context with $C[P]:=P$. So assume $P\sqsubseteq_{\nnr} Q$.
  Then $\textit{stable}(P)$ and $\neg\textit{stable}(Q)$.
  Hence $\epsilon\notin\divd(P)\supseteq \divd(Q)$. 
  Choose $a\notin\ptr(Q)$---in case no such $a$ exists, one first applies an injective relabelling
  to $P$ and $Q$ such that $a\not\in\mbox{range}(f)$.
  Now $\langle\epsilon,\{a\}\rangle\in\failures(Q)\subseteq \faild(Q) \subseteq \faild(P)$.
  However, whereas $\langle\epsilon,\{a\}\rangle\in \faild(Q+a)$
  one has $\langle\epsilon,\{a\}\rangle\notin\faild(P+a)$.
  It follows that $P+a \not\sqsubseteq_{\nnr} Q+a$.

  The arguments for $X\in\{{\scriptstyle\rm reward},~\fpr\}$ are very similar.
\end{proof}

\section{Axiomatisations}\label{sec:axiomatisations}

The following axioms are easily seen to be sound for $\sqsubseteq_{\rm reward}^\tau$.
Here an equality $P\equiv Q$ can be seen as a shorthand for the two axioms $P\sqsubseteq Q$ and $Q \sqsubseteq P$.
Action prefixing and $\Delta$ bind stronger than $+$.
\[\begin{array}{clrclc}
 \multirow{3}{*}{$\left\{\rule{0pt}{21pt}\right.$}
& \mylabel{R1} &  \tau.X + Y &\equiv& \tau.X + \tau.(X+Y) \\
& \mylabel{R2} &  \alpha.X + \tau.(\alpha.Y+Z) &\equiv& \tau(\alpha.X + \alpha.Y + Z) \\
& \mylabel{R3} &  \alpha.(\tau.X + \tau.Y) &\equiv& \alpha.X + \alpha.Y & \multirow{3}{*}{$\left.\rule{0pt}{21pt}\right\}$}\\
& \mylabel{RP1} &  \tau.X + Y &\sqsubseteq& \tau.(X+Y) \\
& \mylabel{RP2} & \tau.X + Y &\sqsubseteq& X \\
& \mylabel{R4} & \tau.\Delta X + Y &\equiv& \Delta (X+Y)
\end{array}\]
For recursion-free processes, and dropping the infinite choice operator in favour of $+$ and ${\bf 0}$,
$\Mustleq^\tau$ coincides with $\sqsubseteq_{\rm reward}^\tau$ and $\sqsubseteq_{\fpr}^\tau$.
Together with the standard axioms for strong bisimilarity \cite{ccs}, the three axioms \myref{R1}--\myref{R3}
constitute a sound and complete axiomatisation of $\Musteq^\tau$ \cite[Theorem 4.2]{CFG15},\pagebreak[1] and thus for $\equiv_{\rm reward}^\tau$.
Likewise, the three axioms \myref{RP1},\myref{RP2} and \myref{R3} constitute a sound and complete
axiomatisation of $\Mustleq^\tau$ \cite[Theorem 4.1]{CFG15}, and thus for $\sqsubseteq_{\rm reward}^\tau$;
the axioms \myref{R1} and \myref{R2} are derivable from them. The first sound and complete
axiomatisation of $\Mustleq^\tau$ appears in \cite{DH84}; their axioms are derivable from the ones
above (and vise versa).

A sound and complete axiomatisation of $\Mayeq$ (and hence of $\equiv_{\npr}$) is obtained by adding
the axioms $\tau.X\equiv X$ and $\alpha(X+Y) \equiv \alpha.X + \alpha.Y$ to the standard axioms for
strong bisimilarity \cite[Theorem 4.5]{CFG15}. The axioms \myref{R1}--\myref{R3} are derivable from them.
Adding the axiom $X+Y\sqsubseteq X$ yields a sound and complete axiomatisation of $\Mayleq^{-1}$
(and hence of $\sqsubseteq_{\npr}$) \cite[Theorem 4.6]{CFG15}. The axioms \myref{RP1} and
\myref{RP2} are then also derivable. The first sound and complete axiomatisation of $\Mayleq$
appears in \cite{DH84}; their axioms are derivable from the ones above (and vise versa).

To illustrate the difference between $\Musteq^\tau$ and $\equiv_{\rm reward}^\tau$, without having
to deal with recursion, I consider recursion-free CCS with finite choice (as done above),
but upgraded with the \emph{delay operator} $\Delta$ introduced in \cite{BKO87} and in \Sec{hierarchy}.
Clearly all preorders of this paper are precongruences for $\Delta$.
With \myref{R4}, sound for $\equiv_{\rm reward}^\tau$, one can derive $\tau.\Delta X \equiv \Delta X$
and $\Delta X + Y \equiv \Delta (X+Y)$. Writing $\Omega$ for $\Delta {\bf 0}$, the latter implies
$\Delta Y \equiv \Omega+Y$ so one can equally well take $\Omega$ as $\Delta$ as primitive. 
It also follows that $\Delta\Delta X \equiv \Delta X$.

The above sound and complete axiomatisations of $\Mayeq$ and $\Mayleq^{-1}$ (and hence of $\equiv_{\npr}$ and $\sqsubseteq_{\npr}$) 
are extended with $\Delta$ by adding the trivial axiom $\Delta X=X$; \myref{R4} is then derivable.
This illustrates that these preorders abstract from divergence.
The axiom
\[\mylabel{R5} \qquad \Delta X \equiv \Delta Y\]
is sound for $\Musteq^\tau$. It expresses that must testing does not record any information past a divergence.
Axioms \myref{RP2}, \myref{R4} and \myref{R5} imply $\Omega \sqsubseteq X$, an axiom featured in \cite{DH84}.
Neither $\Delta X=X$ nor \myref{R5} is sound for $\equiv_{\rm reward}^\tau$.

\section{Failure of congruence property for recursion}

Each preorder $\sqsubseteq$ on CCS processes (= closed CCS expressions) can be extended to one on all CCS
expressions by defining $E \sqsubseteq F$ iff all closed substitution instances of this inequality hold.

\begin{definition}{full}
A preorder $\sqsubseteq$ on $\IE_{\rm CCS}$ is a (full) precongruence for recursion if
$S_Y \sqsubseteq T_Y$ for each $Y\in\dom(S)=\dom(T)$ implies $\rec{S} \subseteq \rec{T}$.
\end{definition}
The following counterexample shows that the must-testing preorder $\Mustleq^\tau$ fails to be a
precongruence for recursion, implying that the must-testing equivalence $\Musteq^\tau$ fails to be a
congruence for recursion.

\begin{example}{recursion congruence}
Let $P\in\IT_{\rm CCS}$ be such that $\epsilon\notin\diverg(P)$---for instance $P={\bf 0}$.\vspace{-3pt}
Then by \myref{R1} one has $\tau.P + X \Musteq^\tau \tau.P + \tau.(X+P)$.
Yet $\rec{\,X\defis \tau.P + X} \not\Mustleq^\tau \rec{\,X\defis \tau.P + \tau.(X+P)}$,
because only the latter process has a divergence $\epsilon$.
\end{example}
The same example shows that also $\sqsubseteq_{\rm reward}^\tau$,  $\sqsubseteq_{\fpr}^\tau$,
$\sqsubseteq_{\rm reward}$,  $\sqsubseteq_{\fpr}$ and $\Mustleq$ fail to be precongruences for recursion.
However, I conjecture that all these preorders are \emph{lean} precongruences for recursion as
defined in \cite{vG17b}.

\section{Unguarded recursion}

The must-testing preorder $\Mustleq$ on CCS presented in this paper is not quite the same as the original
one $\Mustleq^{\rm org}$ from \cite{DH84}. The following example shows the difference.

\begin{example}{unguarded recursion}
\hfill ${\bf 0} \qquad \begin{array}{c}\Musteq\\[1ex]\not\Mustleq^{\rm org}\end{array} \qquad
\rec{\,X\defis X} \qquad \begin{array}{c}\not\Mustleq\\[1ex] \Musteq^{\rm org} \end{array} \qquad \rec{\,X\defis \tau.X}$.\hfill\mbox{}
\end{example}
The $\Musteq$-statement follows since neither process has a single outgoing transition;\vspace{1pt} the
processes are even \emph{strongly bisimilar} \cite{ccs}.
The $\not\Mustleq$-statement follows since \plat{$\epsilon \in \diverg(\rec{\,X\defis \tau.X})$},\vspace{1pt} yet \plat{$\epsilon\notin\diverg(\rec{\,X\defis X})$}.
A test showing the difference is $\tau.\omega$.

The reason that in the original must-testing approach $\rec{\,X\defis X}$ sides with $\rec{\,X\defis\tau.X}$\vspace{2pt}
rather than with ${\bf 0}$,
is that \cite{DH84} treats a process featuring unguarded recursion (cf.\ \cite{ccs}), such as \plat{$\rec{\,X\defis X}$}, as
if it diverges, regardless whether it can do any internal actions $\tau$.
This leads to a must-testing equivalence that is incomparable with strong bisimilarity.

In my view, the decision whether \plat{$\rec{\,X\defis X}$} diverges or not is part of the
definition of the process algebra CCS, and entirely orthogonal to the development of testing equivalences.
Below I define a process algebra CCS$_\bot$ that resembles CCS in all aspects, expect that
any process with unguarded recursion is declared to diverge. I see the work of \cite{DH84} not so
much as defining a must-testing equivalence on CCS that is incomparable with strong bisimilarity, but
rather as defining a must-testing equivalence on CCS$_\bot$, a languages that is almost, but not
quite, the same as CCS\@.\footnote{All processes of \ex{unguarded recursion} are \emph{weakly bisimilar} \cite{ccs}.
In my view this does not mean that weak bisimulation semantics uses a variant of CCS in which none
of these processes diverges. Instead it tells that weak bisimilarity abstracts from divergence.}
This is a matter of opinion, as there is no technical difference between these approaches.

I now proceed to define CCS$_\bot$, and apply the reward testing preorders of this paper to that language.

\begin{definition}{bot}
Let $\downarrow$ be the least predicate on $\Proc_{\rm CCS}$ which satisfies\vspace{-1ex}
\begin{itemise}
\item $\alpha.P \mathop{\downarrow}$ for any $\alpha\in Act$,
\item if $P_i \mathop{\downarrow}$ for all $i\in I$ then $\sum_{i\in I}P_i \mathop{\downarrow}$,
\item if $P \mathop{\downarrow}$ and $Q \mathop{\downarrow}$ then $P|Q \mathop{\downarrow}$, $P\backslash\RL  \mathop{\downarrow}$
  and $P[f] \mathop{\downarrow}$,
\item if $\rec[S_X]{S} \mathop{\downarrow}$ then $\rec{S} \mathop{\downarrow}$.
\end{itemise}
Let $P \mathop{\uparrow}$ if not $P \mathop{\downarrow}$. If $P \mathop{\uparrow}$ then $P$
features \emph{strongly unguarded recursion}.\footnote{Un(strongly unguarded) recursion should not
  be called ``strongly guarded'' recursion; it is weaker than guarded recursion.}
\end{definition}
Note that ${\bf 0} \mathop{\downarrow}$, $\rec{\,X\defis X}  \mathop{\uparrow}$ and $\rec{\,X\defis \tau.X} \mathop{\downarrow}$,
the latter because in \df{bot} $\tau$ is allowed as a \emph{guard}.
The definitions of this paper are adapted to CCS$_\bot$ by redefining 
 $P$ \emph{diverges}, notation $P{\Uparrow}$, if either there is a $P'$ with $P \dto{} P' \mathop{\uparrow}$
or there are $P_i\in\mathbb{P}$ for all $i>0$ such that \plat{$P\goto{\tau}P_1\goto{\tau}P_2\goto{\tau} \cdots$}.
In \df{reward computation}, and similarly for \df{computation}, clause (i) is replaced by (i$'$) ``if $T_n$
is the final element in $\pi$, then either $T_n\mathop{\uparrow}$ or \plat{$T_n \ar{\tau,r} T$} for no $r$ and $T$''.
Now all results for CCS from \Sects~\ref{sec:classical}--\ref{sec:axiomatisations} remain valid for
CCS$_\bot$ as well. The only change in the proofs of Theorems~\ref{thm:reward characterisation}--\ref{thm:nonnegative reward characterisation},
direction ``$\Leftarrow$'', is that finite paths ending in $\downarrow$ are treated like infinite paths.

My definition of $\Mustleq$ on CCS$_\bot$ differs on two points from the definition of $\Mustleq$ on
CCS$_\bot$ from \cite{DH84}. But both differences are inessential, and the resulting notion of
$\Mustleq$ is the same. The first difference is that in \cite{DH84} the notion of computation is
exactly as in \df{computation}, rather than the amended form above. However, in \cite{DH84} a
computation $\pi=T_0,T_1,T_2,\dots\in\Comp(T|P)$ counts as successful only if
(a) it contains a state $T$ with $T \ar{\omega} T'$ for some $T'$,
and (b) if $T_k\mathop\uparrow$ then $T_{k'} \ar{\omega} T'$ for some $T'$ and some $k'\leq k$.
It is straightforward to check that $\Apply(T|P)$ remains the same upon dropping (b) and changing (i) into (i$'$).\linebreak[2]
The other difference is that in \cite{DH84} $\tau$ does not count as a guard---their version of \df{bot} requires $\alpha\in\A$.
So in \cite{DH84} one has \plat{$\rec{\,X\defis \tau.X} \mathop{\uparrow}$}. The notion of $\downarrow$
from \cite{DH84} is therefore closer to unguarded recursion rather than strongly unguarded recursion.
However, in the treatment of \cite{DH84} one would have \plat{$\rec{\,X\defis a.X|\bar{a}\,} \mathop{\downarrow}$},
showing that the resulting notion of guardedness is not very robust.
Since the essential difference between CCS and CCS$_\bot$ is that in CCS$_\bot$ a strongly unguarded
recursion is treated as a divergence, it does not matter whether $\downarrow$ also includes all or
some not-strongly unguarded recursions, such as \plat{$\rec{\,X\defis \tau.X}$}. For any such
not-strongly unguarded recursion is already divergent, and hence it does not make difference whether
it is declared syntactically divergent as well.

An alternative to moving from CCS to CCS$_\bot$ is to restrict either language to processes $P$
satisfying $P\mathop\downarrow$. This restriction rules out the process \plat{$\rec{\,X\defis X}$},
but includes \plat{$\rec{\,X\defis \tau.X}$}. On this restricted set of processes their is no
difference between CCS and CCS$_\bot$.

Another approach to making unguarded recursions divergent is to change the rule \myref{Rec} from
\tab{CCS} into $\rec{S}\ar\tau\rec[S_X]{S}$; this is done in the setting of CSP \cite{OH86}.
This would not have the same result, however, as here and in \cite{DH84} one has \plat{$a+\rec{X\defis b}\Musteq a+b$}.

The great advantage of moving from CCS to CCS$_\bot$ is that Counterexample~\ref{ex:recursion congruence}, against
testing preorders being congruences for recursion, disappears.
\begin{open}{recursion congruence}
  Are $\sqsubseteq_{\rm reward}^\tau$,  $\sqsubseteq_{\fpr}^\tau$ and $\Mustleq^\tau$ precongruences
  for recursion on CCS$_\bot$?
\end{open}
In \cite{DH84} it is shown that, in the absence of infinite choice, $\Mustleq^\tau$ is a precongruences for recursion.
Central in the proof is that on CCS$_\bot$ with finite choice, the clause on infinite traces ($\infd(P)\supseteq\infd(Q)$)
may be dropped from \thm{nonnegative reward characterisation}, since the infinite traces
$\infd(P)$ of a CCS$_\bot$ process $P$ with finite choice are completely determined by $\divd(P)$ and $\faild(P)$.
This proof does not generalise to $\sqsubseteq_{\rm reward}^\tau$ or $\sqsubseteq_{\fpr}^\tau$,
since here, on CCS$_\bot$ with finite choice, the infinite traces are not redundant.
The proof also does not generalise to $\Mustleq^\tau$ on CCS with infinite choice.

In \cite{Ros97} it is shown that $\sqsubseteq_{\it FDI}^\bot$ (cf.\ \thm{nonnegative reward characterisation}),
which coincides with $\Mustleq$, is a congruence for recursion on the language CSP\@.
I expect that similar reasoning can show that $\sqsubseteq_{\rm reward}^\tau$ is a congruence for recursion on CCS$_\bot$.
In \cite{Ros05} it is shown that $\sqsubseteq_{\it FDI}^d$ (cf.\ \thm{finite-penalty reward characterisation}),
which coincides with $\sqsubseteq_{\fpr}$,  is a congruence for recursion on CSP\@.
I expect that similar reasoning can show that $\sqsubseteq_{\fpr}^\tau$ is a congruence for recursion on CCS$_\bot$.
Roscoe \cite{Ros05} also presents an example, independently discovered by Levy \cite{Levy08},
showing that $\equiv_{\it NDFD}$ (cf.\ \thm{reward characterisation}),
which coincides with $\sqsubseteq_{\rm reward}$, fails to be a congruence for recursion:\footnote{The
  example was formulated for another equivalence, but actually applies to a range
  of equivalences, including $\equiv_{\it NDFD}$.}
Let $\textit{FA}$ be a process that has \emph{all} conceivable failures, divergences and infinite
traces, except for the infinite trace $a^\infty$. Then $FA+\tau.X \equiv_{\it NDFD} FA+a.X$,
for both sides have all conceivable failures, divergences and infinite
traces, with the possible exception of $a^\infty$, and both side have the infinite trace $a^\infty$
iff $X$ has it. However,\vspace{-2ex}
$$\rec{FA+\tau.X} ~\not\equiv_{\it NDFD}~ \rec{FA+a.X}\vspace{-1ex}$$
since only the latter process has the infinite trace $a^\infty$.

It could be argued that this example shows that the definition of being a congruence for recursion
ought to be sharpened, for instance by requiring that $E \sqsubseteq F$ holds only if all
closed substitutions of $E \sqsubseteq F$ employing an extended alphabet of actions hold.
This would invalidate $FA+\tau.X \equiv_{\it NDFD} FA+a.X$, namely by substituting $b$ for $X$, with
$b$ a fresh action, not alluded to in $FA$. With such a sharpening, the question whether
$\sqsubseteq_{\rm reward}^\tau$ is a congruence for recursion on CCS$_\bot$ is open.

\section{Related work}

The concept of reward testing stems from \cite{JHW94}, in the setting of nondeterministic
probabilistic processes. In the terminology of \Sec{weaker}, they employ single reward nonnegative
reward testing.  In \cite{DGMZ07} it was shown, again in a probabilistic setting, that nonnegative
reward testing is no more powerful then classical testing. This result is a
probabilistic analogue of \thm{must}.  Negative rewards were first proposed in \cite{vG09}, a
predecessor of the present paper.\pagebreak[2]  In \cite{DGHM13}, reward testing with also negative rewards,
called \emph{real-reward} testing, was applied to nondeterministic probabilistic processes.
Although technically no rewards can be gathered after a first reward has been encountered, thanks to
probabilistic branching rewards can be distributed over multiple actions in a computation. This
makes the approach a probabilistic generalisation of the reward testing proposed here. The main
result of \cite{DGHM13} is that for finitary (= finite-state and finitely many transitions)
nondeterministic probabilistic processes without divergence, real-reward testing coincides with
nonnegative reward testing. This is a generalisation (to probabilistic processes) of a specialisation
(to finitary processes) of \pr{divergence-free}. An explicit characterisation (as in
\thm{reward characterisation}) of real-reward testing for processes with divergence was not
attempted in \cite{DGHM13}.

The \emph{nondivergent failures divergences} equivalence, $\equiv_{\it NDFD}$, defined in the proof
of \thm{reward characterisation}, stems from \cite{KV92}. There it was shown to be
the coarsest congruence (for a collection of operators equivalent to the ones used in \Sec{liveness})
that preserves those linear-time properties (cf.\ \df{LT}) that can be expressed in linear-time
temporal logic without the nexttime operator. If follows directly from their proof that
it is also the  coarsest congruence that preserves \emph{all} linear-time properties as defined in \df{LT};
so $\equiv_{\it NDFD}$ coincides with $\equiv_{\it lt.\ properties}$, as remarked at the end of \Sec{liveness}.
It is this result that inspired \thm{reward characterisation} in the current paper.

The paper \cite{Le94} argues that $\equiv_{\it NDFD}$ can be seen as a testing equivalence, but does not offer
a testing scenario in quite the same style as \cite{DH84} or the current paper.

The semantic equivalence $\equiv_{\it FDI}^d$, whose associated preorder occurs in the proof of 
\thm{finite-penalty reward characterisation}, stems from \cite{Pu01}.  There it was shown to be
the coarsest congruence (for the same operators)
that preserves $\deadlocks(P) \cup \diverg(P)$, the combined deadlock and divergence traces of a
process (cf.\ \df{traces}). It is this result that directly led (via \cite[Theorem 9]{vG10})
to \thm{finite-penalty reward characterisation} in the current paper.

In \cite{DH84} the action $\omega$ is used merely to mark certain states as success states, namely
the states were an $\omega$-transition is enabled; a computation is successful iff it passes through
such a success state. In \cite{Seg96}, on the other hand, it is the \emph{actual execution} of
$\omega$ that constitutes success. In \cite{DGMZ07,DGHM08}, this is called \emph{action-based testing};
\cite[Proposition 5.1 and Example 5.3]{DGHM08} shows that action-based must testing is strictly
less discriminating than state-based must-besting:
\begin{center}
$\tau.a.\Omega \Musteq^{\mbox{\scriptsize action-based}} \tau.a.\Omega + \tau.{\bf 0}$,
\qquad whereas \qquad $\tau.a.\Omega \not\Mustleq \tau.a.\Omega + \tau.{\bf 0}$.
\end{center}
The preorders in the current paper are generalisations of state-based testing;
an action-based form of reward testing could be obtained by only allowing $\tau$-actions to carry
non-0 rewards. The same counterexample as above would show the difference between state- and
action-based reward testing.

The reward testing contributed here constitutes a strengthening of the testing machinery of De
Nicola \& Hennessy. As such it differs from testing-based approaches that lead to incomparable
preorders, such as the \emph{efficiency testing} of \cite{Vo02}, or the \emph{fair testing}
independently proposed in \cite{BRV95} and \cite{NC95}.

In \cite{vG16} I advocate an overhaul of concurrency theory to ensure liveness properties
when making the reasonable assumption of \emph{justness}. The current work is prior to any such overhaul.
It is consistent with the principles of \cite{vG16} when pretending that the parallel composition $|$ of
CCS is in fact not a parallel composition of independent processes, but an interleaving operator,
scheduling two parallel treads by means of arbitrary interleaving.

\section{Conclusion}

In this paper I contributed a concept of reward testing, strengthening the may and must testing of
De Nicola \& Hennessy. Inspired by \cite{KV92,Pu01}, I provided an explicit characterisation of the
reward-testing preorder, as well as of a slight weakening, called finite-penalty reward
testing. Must testing can be recovered by only considering positive rewards, and may testing by only
considering negative rewards.  While the must-testing preorder preserves liveness properties, and
the inverse of the may-testing preorder (which can also be seen as a must-testing preorder dealing
with catastrophes rather than successes) preserves safety properties, the (finite-penalty) reward
testing preorder, which is finer than both, additionally preserves conditional liveness
properties. I illustrated the difference between may testing, must testing and (finite-penalty)
reward testing in terms of their equational axiomatisations.  When applied to CCS as intended by
Milner, must-testing equivalence fails to be a congruence for recursion, and the same problem exists
for reward testing. The counterexample is eliminated by applying it to a small variant of CCS that,
following \cite{DH84}, treats a process with unguarded recursion as if it is diverging, even if it
cannot make any internal moves. In this setting, by analogy with Roscoe's work on CSP \cite{Ros97,Ros05},
I expect must-testing and finite-penalty reward testing to be congruences for recursion; for reward
testing this question remains open.

\bibliographystyle{eptcsini}
\bibliography{gdpf}

\begin{thebibliography}{10}
\providecommand{\bibitemdeclare}[2]{}
\providecommand{\surnamestart}{}
\providecommand{\surnameend}{}
\providecommand{\urlprefix}{Available at }
\providecommand{\url}[1]{\texttt{#1}}
\providecommand{\href}[2]{\texttt{#2}}
\providecommand{\urlalt}[2]{\href{#1}{#2}}
\providecommand{\doi}[1]{doi:\urlalt{http://dx.doi.org/#1}{#1}}
\providecommand{\bibinfo}[2]{#2}

\bibitemdeclare{incollection}{AJ94}
\bibitem{AJ94}
\bibinfo{author}{S.~\surnamestart Abramsky\surnameend} \&
  \bibinfo{author}{A.~\surnamestart Jung\surnameend} (\bibinfo{year}{1994}):
  \emph{\bibinfo{title}{Domain Theory}}.
\newblock In: {\sl \bibinfo{booktitle}{Handbook of Logic and Computer
  Science}}, \bibinfo{volume}{3}, \bibinfo{publisher}{Clarendon Press}, pp.
  \bibinfo{pages}{1--168}.

\bibitemdeclare{article}{AS85}
\bibitem{AS85}
\bibinfo{author}{B.~\surnamestart Alpern\surnameend} \&
  \bibinfo{author}{F.B.~\surnamestart Schneider\surnameend}
  (\bibinfo{year}{1985}): \emph{\bibinfo{title}{Defining liveness}}.
\newblock {\sl \bibinfo{journal}{Information Processing Letters}}
  \bibinfo{volume}{21}(\bibinfo{number}{4}), pp. \bibinfo{pages}{181--185},
  \doi{10.1016/0020-0190(85)90056-0}.

\bibitemdeclare{inproceedings}{BKO87}
\bibitem{BKO87}
\bibinfo{author}{J.A.~\surnamestart Bergstra\surnameend},
  \bibinfo{author}{J.W.~\surnamestart Klop\surnameend} \&
  \bibinfo{author}{E.-R.~\surnamestart Olderog\surnameend}
  (\bibinfo{year}{1987}): \emph{\bibinfo{title}{Failures without chaos: a new
  process semantics for fair abstraction}}.
\newblock In \bibinfo{editor}{M.~\surnamestart Wirsing\surnameend}, editor:
  {\sl \bibinfo{booktitle}{Formal Description of Programming Concepts -- III,
  Proceedings of the $3^{th}$ IFIP WG 2.2 working conference, {\rm Ebberup
  1986}}}, \bibinfo{publisher}{North-Holland}, \bibinfo{address}{Amsterdam},
  pp. \bibinfo{pages}{77--103}.

\bibitemdeclare{inproceedings}{BRV95}
\bibitem{BRV95}
\bibinfo{author}{E.~\surnamestart Brinksma\surnameend},
  \bibinfo{author}{A.~\surnamestart Rensink\surnameend} \&
  \bibinfo{author}{W.~\surnamestart Vogler\surnameend} (\bibinfo{year}{1995}):
  \emph{\bibinfo{title}{Fair Testing}}.
\newblock In \bibinfo{editor}{I.~\surnamestart Lee\surnameend} \&
  \bibinfo{editor}{S.~\surnamestart Smolka\surnameend}, editors: {\sl
  \bibinfo{booktitle}{{\rm Proceedings 6th International Conference on}
  Concurrency Theory, {\rm ({CONCUR}'95), Philadelphia, PA, USA, August
  1995}}}, {\sl \bibinfo{series}{\rm LNCS}} \bibinfo{volume}{962},
  \bibinfo{publisher}{Springer}, pp. \bibinfo{pages}{313--327},
  \doi{10.1007/3-540-60218-6_23}.

\bibitemdeclare{article}{CFG15}
\bibitem{CFG15}
\bibinfo{author}{T.~\surnamestart Chen\surnameend},
  \bibinfo{author}{W.J.~\surnamestart Fokkink\surnameend} \&
  \bibinfo{author}{R.J.~\surnamestart van Glabbeek\surnameend}
  (\bibinfo{year}{2015}): \emph{\bibinfo{title}{On the Axiomatizability of
  Impossible Futures}}.
\newblock {\sl \bibinfo{journal}{Logical Methods in Computer Science}}
  \bibinfo{volume}{11}(\bibinfo{number}{3}):\bibinfo{eid}{17},
  \doi{10.2168/LMCS-11(3:17)2015}.

\bibitemdeclare{article}{DH84}
\bibitem{DH84}
\bibinfo{author}{R.~\surnamestart De~Nicola\surnameend} \&
  \bibinfo{author}{M.~\surnamestart Hennessy\surnameend}
  (\bibinfo{year}{1984}): \emph{\bibinfo{title}{Testing equivalences for
  processes}}.
\newblock {\sl \bibinfo{journal}{Theoretical Computer Science}}
  \bibinfo{volume}{34}, pp. \bibinfo{pages}{83--133},
  \doi{10.1016/0304-3975(84)90113-0}.

\bibitemdeclare{article}{DGHM08}
\bibitem{DGHM08}
\bibinfo{author}{Y.~\surnamestart Deng\surnameend},
  \bibinfo{author}{R.J.~\surnamestart van Glabbeek\surnameend},
  \bibinfo{author}{M.~\surnamestart Hennessy\surnameend} \&
  \bibinfo{author}{C.C.~\surnamestart Morgan\surnameend}
  (\bibinfo{year}{2008}): \emph{\bibinfo{title}{Characterising Testing
  Preorders for Finite Probabilistic Processes}}.
\newblock {\sl \bibinfo{journal}{Logical Methods in Computer Science}}
  \bibinfo{volume}{4}(\bibinfo{number}{4}):\bibinfo{eid}{4},
  \doi{10.2168/LMCS-4(4:4)2008}.

\bibitemdeclare{article}{DGHM13}
\bibitem{DGHM13}
\bibinfo{author}{Y.~\surnamestart Deng\surnameend},
  \bibinfo{author}{R.J.~\surnamestart van Glabbeek\surnameend},
  \bibinfo{author}{M.~\surnamestart Hennessy\surnameend} \&
  \bibinfo{author}{C.C.~\surnamestart Morgan\surnameend}
  (\bibinfo{year}{2014}): \emph{\bibinfo{title}{Real-Reward Testing for
  Probabilistic Processes}}.
\newblock {\sl \bibinfo{journal}{Theoretical Computer Science}}
  \bibinfo{volume}{538}, pp. \bibinfo{pages}{16--36},
  \doi{10.1016/j.tcs.2013.07.016}.

\bibitemdeclare{incollection}{DGHMZ07}
\bibitem{DGHMZ07}
\bibinfo{author}{Y.~\surnamestart Deng\surnameend},
  \bibinfo{author}{R.J.~\surnamestart van Glabbeek\surnameend},
  \bibinfo{author}{M.~\surnamestart Hennessy\surnameend},
  \bibinfo{author}{C.C.~\surnamestart Morgan\surnameend} \&
  \bibinfo{author}{C.~\surnamestart Zhang\surnameend} (\bibinfo{year}{2007}):
  \emph{\bibinfo{title}{Remarks on Testing Probabilistic Processes}}.
\newblock In \bibinfo{editor}{L.~\surnamestart Cardelli\surnameend},
  \bibinfo{editor}{M.~\surnamestart Fiore\surnameend} \&
  \bibinfo{editor}{G.~\surnamestart Winskel\surnameend}, editors: {\sl
  \bibinfo{booktitle}{Computation, Meaning, and Logic: Articles dedicated to
  Gordon Plotkin}}, {\sl \bibinfo{series}{Electronic Notes in Theoretical
  Computer Science}} \bibinfo{volume}{172}, \bibinfo{publisher}{Elsevier}, pp.
  \bibinfo{pages}{359--397}, \doi{10.1016/j.entcs.2007.02.013}.

\bibitemdeclare{inproceedings}{DGMZ07}
\bibitem{DGMZ07}
\bibinfo{author}{Y.~\surnamestart Deng\surnameend},
  \bibinfo{author}{R.J.~\surnamestart van Glabbeek\surnameend},
  \bibinfo{author}{C.C.~\surnamestart Morgan\surnameend} \&
  \bibinfo{author}{C.~\surnamestart Zhang\surnameend} (\bibinfo{year}{2007}):
  \emph{\bibinfo{title}{Scalar Outcomes Suffice for Finitary Probabilistic
  Testing}}.
\newblock In \bibinfo{editor}{R.~\surnamestart {De Nicola}\surnameend}, editor:
  {\sl \bibinfo{booktitle}{{\rm Proceedings 16th} European Symposium on
  Programming, {\rm ESOP 2007, Braga, Portugal}}}, {\sl \bibinfo{series}{\rm
  LNCS}} \bibinfo{volume}{4421}, \bibinfo{publisher}{Springer}, pp.
  \bibinfo{pages}{363--378}, \doi{10.1007/978-3-540-71316-6\_25}.

\bibitemdeclare{misc}{vG09}
\bibitem{vG09}
\bibinfo{author}{R.J.~\surnamestart van Glabbeek\surnameend}
  (\bibinfo{year}{2009}): \emph{\bibinfo{title}{The Linear Time – Branching
  Time Spectrum after 20 years, {\rm or} Full abstraction for safety and
  liveness properties}}.
\newblock \bibinfo{howpublished}{Copies of slides. Invited talk for IFIP WG 1.8
  at CONCUR 2009 in Bologna}.
\newblock
  \urlprefix\url{http://theory.stanford.edu/~rvg/abstracts.html#20years}.

\bibitemdeclare{inproceedings}{vG10}
\bibitem{vG10}
\bibinfo{author}{R.J.~\surnamestart van Glabbeek\surnameend}
  (\bibinfo{year}{2010}): \emph{\bibinfo{title}{The Coarsest Precongruences
  Respecting Safety and Liveness Properties}}.
\newblock In \bibinfo{editor}{C.S.~\surnamestart Calude\surnameend} \&
  \bibinfo{editor}{V.~\surnamestart Sassone\surnameend}, editors: {\sl
  \bibinfo{booktitle}{{\rm Proceedings 6th IFIP TC 1/WG 2.2 International
  Conference on} Theoretical Computer Science {\rm (TCS 2010); held as part of
  the {\sl World Computer Congress} 2010, Brisbane, Australia}}}, {\sl
  \bibinfo{series}{IFIP}} \bibinfo{volume}{323}, \bibinfo{publisher}{Springer},
  pp. \bibinfo{pages}{32--52}, \doi{10.1007/978-3-642-15240-5_3}.

\bibitemdeclare{misc}{vG16}
\bibitem{vG16}
\bibinfo{author}{R.J.~\surnamestart van Glabbeek\surnameend}
  (\bibinfo{year}{2016}): \emph{\bibinfo{title}{Ensuring Liveness Properties of
  Distributed Systems (A Research Agenda)}}.
\newblock \bibinfo{howpublished}{Position paper}.
\newblock \urlprefix\url{https://arxiv.org/abs/1711.04240}.

\bibitemdeclare{inproceedings}{vG17b}
\bibitem{vG17b}
\bibinfo{author}{R.J.~\surnamestart van Glabbeek\surnameend}
  (\bibinfo{year}{2017}): \emph{\bibinfo{title}{Lean and Full Congruence
  Formats for Recursion}}.
\newblock In: {\sl \bibinfo{booktitle}{{\rm Proceedings $32^{nd}$ Annual
  ACM/IEEE Symposium on} Logic in Computer Science, {\rm LICS 2017, Reykjavik,
  Iceland, June 2017}}}, \bibinfo{publisher}{IEEE Computer Society Press},
  \doi{10.1109/LICS.2017.8005142}.

\bibitemdeclare{techreport}{GH15a}
\bibitem{GH15a}
\bibinfo{author}{R.J.~\surnamestart van Glabbeek\surnameend} \&
  \bibinfo{author}{P.~\surnamestart H{\"o}fner\surnameend}
  (\bibinfo{year}{2015}): \emph{\bibinfo{title}{Progress, Fairness and Justness
  in Process Algebra}}.
\newblock \bibinfo{type}{Technical Report} \bibinfo{number}{8501},
  \bibinfo{institution}{NICTA}, \bibinfo{address}{Sydney, Australia}.
\newblock \urlprefix\url{http://arxiv.org/abs/1501.03268}.

\bibitemdeclare{inproceedings}{Hen82}
\bibitem{Hen82}
\bibinfo{author}{M.~\surnamestart Hennessy\surnameend} (\bibinfo{year}{1982}):
  \emph{\bibinfo{title}{Powerdomains and nondeterministic recursive
  definitions}}.
\newblock In: {\sl \bibinfo{booktitle}{{\rm Proceedings 5th Intern.} Symposium
  on Programming}}, {\sl \bibinfo{series}{\rm LNCS}} \bibinfo{volume}{137},
  \bibinfo{publisher}{Springer}, pp. \bibinfo{pages}{178--193},
  \doi{10.1007/3-540-11494-7\_13}.

\bibitemdeclare{book}{henn}
\bibitem{henn}
\bibinfo{author}{M.~\surnamestart Hennessy\surnameend} (\bibinfo{year}{1988}):
  \emph{\bibinfo{title}{An Algebraic Theory of Processes}}.
\newblock \bibinfo{publisher}{MIT Press}.

\bibitemdeclare{inproceedings}{JHW94}
\bibitem{JHW94}
\bibinfo{author}{B.~\surnamestart Jonsson\surnameend},
  \bibinfo{author}{C.~\surnamestart Ho-Stuart\surnameend} \&
  \bibinfo{author}{W.~\surnamestart Yi\surnameend} (\bibinfo{year}{1994}):
  \emph{\bibinfo{title}{Testing and Refinement for Nondeterministic and
  Probabilistic Processes}}.
\newblock In: {\sl \bibinfo{booktitle}{{\rm Proceedings of the 3rd
  International Symposium on} Formal Techniques in Real-Time and Fault-Tolerant
  Systems}}, {\sl \bibinfo{series}{\rm LNCS}} \bibinfo{volume}{863},
  \bibinfo{publisher}{Springer}, pp. \bibinfo{pages}{418--430},
  \doi{10.1007/3-540-58468-4_176}.

\bibitemdeclare{inproceedings}{KV92}
\bibitem{KV92}
\bibinfo{author}{R.~\surnamestart Kaivola\surnameend} \&
  \bibinfo{author}{A.~\surnamestart Valmari\surnameend} (\bibinfo{year}{1992}):
  \emph{\bibinfo{title}{The Weakest Compositional Semantic Equivalence
  Preserving Nexttime-less Linear Temporal Logic}}.
\newblock In \bibinfo{editor}{R.~\surnamestart Cleaveland\surnameend}, editor:
  {\sl \bibinfo{booktitle}{CONCUR'92}}, {\sl \bibinfo{series}{\rm LNCS}}
  \bibinfo{volume}{630}, \bibinfo{publisher}{Springer}, pp.
  \bibinfo{pages}{207--221}, \doi{10.1007/BFb0084793}.

\bibitemdeclare{article}{Lam77}
\bibitem{Lam77}
\bibinfo{author}{L.~\surnamestart Lamport\surnameend} (\bibinfo{year}{1977}):
  \emph{\bibinfo{title}{Proving the correctness of multiprocess programs}}.
\newblock {\sl \bibinfo{journal}{IEEE Transactions on Software Engineering}}
  \bibinfo{volume}{3}(\bibinfo{number}{2}), pp. \bibinfo{pages}{125--143},
  \doi{10.1109/TSE.1977.229904}.

\bibitemdeclare{article}{Lam98}
\bibitem{Lam98}
\bibinfo{author}{L.~\surnamestart Lamport\surnameend} (\bibinfo{year}{1998}):
  \emph{\bibinfo{title}{Proving Possibility Properties}}.
\newblock {\sl \bibinfo{journal}{Theoretical Computer Science}}
  \bibinfo{volume}{206}(\bibinfo{number}{1-2}), pp. \bibinfo{pages}{341--352},
  \doi{10.1016/S0304-3975(98)00129-7}.
\newblock \bibinfo{note}{See especially
  \url{http://research.microsoft.com/en-us/um/people/lamport/pubs/pubs.html\#lamport-possibility}}.

\bibitemdeclare{inproceedings}{Le94}
\bibitem{Le94}
\bibinfo{author}{G.~\surnamestart Leduc\surnameend} (\bibinfo{year}{1994}):
  \emph{\bibinfo{title}{Failure-based congruences, unfair divergences and new
  testing theory}}.
\newblock In \bibinfo{editor}{S.T.~\surnamestart Vuong\surnameend} \&
  \bibinfo{editor}{S.T.~\surnamestart Chanson\surnameend}, editors: {\sl
  \bibinfo{booktitle}{{\rm Proceedings Fourteenth {IFIP} {WG6.1} International
  Symposium on} Protocol Specification, Testing and Verification, {\rm
  Vancouver, BC, Canada, 1994}}}, {\sl \bibinfo{series}{{IFIP} Conference
  Proceedings}}~\bibinfo{volume}{1}, \bibinfo{publisher}{Chapman {\&} Hall},
  pp. \bibinfo{pages}{252--267}.

\bibitemdeclare{article}{Levy08}
\bibitem{Levy08}
\bibinfo{author}{P.B.~\surnamestart Levy\surnameend} (\bibinfo{year}{2008}):
  \emph{\bibinfo{title}{Infinite trace equivalence}}.
\newblock {\sl \bibinfo{journal}{Annals of Pure and Applied Logic}}
  \bibinfo{volume}{151}(\bibinfo{number}{2-3}), pp. \bibinfo{pages}{170--198},
  \doi{10.1016/j.apal.2007.10.007}.

\bibitemdeclare{incollection}{ccs}
\bibitem{ccs}
\bibinfo{author}{R.~\surnamestart Milner\surnameend} (\bibinfo{year}{1990}):
  \emph{\bibinfo{title}{Operational and algebraic semantics of concurrent
  processes}}.
\newblock In \bibinfo{editor}{J.~\surnamestart van Leeuwen\surnameend}, editor:
  {\sl \bibinfo{booktitle}{Handbook of Theoretical Computer Science}},
  chapter~\bibinfo{chapter}{19}, \bibinfo{publisher}{Elsevier Science
  Publishers B.V. (North-Holland)}, pp. \bibinfo{pages}{1201--1242}.
\newblock \bibinfo{note}{Alternatively see{ \em Communication and Concurrency},
  Prentice-Hall, Englewood Cliffs, 1989}.

\bibitemdeclare{inproceedings}{NC95}
\bibitem{NC95}
\bibinfo{author}{V.~\surnamestart Natarajan\surnameend} \&
  \bibinfo{author}{R.~\surnamestart Cleaveland\surnameend}
  (\bibinfo{year}{1995}): \emph{\bibinfo{title}{Divergence and Fair Testing}}.
\newblock In \bibinfo{editor}{Z.~\surnamestart F{\"{u}}l{\"{o}}p\surnameend} \&
  \bibinfo{editor}{F.~\surnamestart G{\'{e}}cseg\surnameend}, editors: {\sl
  \bibinfo{booktitle}{{\rm Proceedings 22nd International Colloquium on}
  Automata, Languages and Programming {\rm (ICALP'95), Szeged, Hungary, July
  1995}}}, {\sl \bibinfo{series}{\rm LNCS}} \bibinfo{volume}{944},
  \bibinfo{publisher}{Springer}, pp. \bibinfo{pages}{648--659},
  \doi{10.1007/3-540-60084-1\_112}.

\bibitemdeclare{article}{OH86}
\bibitem{OH86}
\bibinfo{author}{E.-R.~\surnamestart Olderog\surnameend} \&
  \bibinfo{author}{C.A.R.~\surnamestart Hoare\surnameend}
  (\bibinfo{year}{1986}): \emph{\bibinfo{title}{Specification-oriented
  semantics for communicating processes}}.
\newblock {\sl \bibinfo{journal}{Acta Informatica}} \bibinfo{volume}{23}, pp.
  \bibinfo{pages}{9--66}, \doi{10.1007/BF00268075}.

\bibitemdeclare{inproceedings}{Pu01}
\bibitem{Pu01}
\bibinfo{author}{A.~\surnamestart Puhakka\surnameend} (\bibinfo{year}{2001}):
  \emph{\bibinfo{title}{Weakest Congruence Results Concerning ``Any-Lock''}}.
\newblock In \bibinfo{editor}{N.~\surnamestart Kobayashi\surnameend} \&
  \bibinfo{editor}{B.~\surnamestart Pierce\surnameend}, editors: {\sl
  \bibinfo{booktitle}{{\rm Proceedings 4th International Symposium on}
  Theoretical Aspects of Computer Software, {\rm TACS 2001, Sendai, Japan,
  2001}}}, {\sl \bibinfo{series}{\rm LNCS}} \bibinfo{volume}{2215},
  \bibinfo{publisher}{Springer}, pp. \bibinfo{pages}{400--419},
  \doi{10.1007/3-540-45500-0\_20}.

\bibitemdeclare{book}{Ros97}
\bibitem{Ros97}
\bibinfo{author}{A.W.~\surnamestart Roscoe\surnameend} (\bibinfo{year}{1997}):
  \emph{\bibinfo{title}{The Theory and Practice of Concurrency}}.
\newblock \bibinfo{publisher}{Prentice-Hall}.
\newblock
  \urlprefix\url{http://www.comlab.ox.ac.uk/bill.roscoe/publications/68b.pdf}.

\bibitemdeclare{inproceedings}{Ros05}
\bibitem{Ros05}
\bibinfo{author}{A.W.~\surnamestart Roscoe\surnameend} (\bibinfo{year}{2005}):
  \emph{\bibinfo{title}{Seeing Beyond Divergence}}.
\newblock In \bibinfo{editor}{A.E.~\surnamestart Abdallah\surnameend},
  \bibinfo{editor}{C.B.~\surnamestart Jones\surnameend} \&
  \bibinfo{editor}{J.W.~\surnamestart Sanders\surnameend}, editors: {\sl
  \bibinfo{booktitle}{Communicating Sequential Processes: The First 25 Years,
  Symposium on the Occasion of 25 Years of CSP, {\rm London, UK, 2004, Revised
  Invited Papers}}}, {\sl \bibinfo{series}{\rm LNCS}} \bibinfo{volume}{3525},
  \bibinfo{publisher}{Springer}, pp. \bibinfo{pages}{15--35},
  \doi{10.1007/11423348\_2}.

\bibitemdeclare{inproceedings}{Seg96}
\bibitem{Seg96}
\bibinfo{author}{R.~\surnamestart Segala\surnameend} (\bibinfo{year}{1996}):
  \emph{\bibinfo{title}{Testing Probabilistic Automata}}.
\newblock In: {\sl \bibinfo{booktitle}{{\rm Proceedings of the 7th
  International Conference on} Concurrency Theory}}, {\sl \bibinfo{series}{\rm
  LNCS}} \bibinfo{volume}{1119}, \bibinfo{publisher}{Springer}, pp.
  \bibinfo{pages}{299--314}, \doi{10.1007/3-540-61604-7\_62}.

\bibitemdeclare{article}{Vo02}
\bibitem{Vo02}
\bibinfo{author}{W.~\surnamestart Vogler\surnameend} (\bibinfo{year}{2002}):
  \emph{\bibinfo{title}{Efficiency of asynchronous systems, read arcs, and the
  {MUTEX}-problem}}.
\newblock {\sl \bibinfo{journal}{Theoretical Computer Science}}
  \bibinfo{volume}{275}(\bibinfo{number}{1-2}), pp. \bibinfo{pages}{589--631},
  \doi{10.1016/S0304-3975(01)00300-0}.

\end{thebibliography}
\end{document}